\documentclass[journal,10pt]{IEEEtran}
\usepackage{amsmath,amsfonts,amssymb}
\usepackage{algorithmic}
\usepackage{algorithm}
\usepackage{array}
\usepackage{textcomp}
\usepackage{stfloats}
\usepackage{url}
\usepackage{verbatim}
\usepackage{graphicx}
\usepackage{mathtools}
\usepackage{amsthm}
\usepackage{cuted}      
\usepackage{stfloats}   
\usepackage{graphicx}
\newtheorem{proposition}{Proposition}

\newtheorem{remark}{Remark}
\newtheorem{theorem}{Theorem}
\newtheorem{assumption}{Assumption}
\usepackage{booktabs}
\usepackage{graphicx}
\usepackage{makecell}
\usepackage[colorlinks]{hyperref}
\usepackage{subfigure}

\begin{document}

\title{HARQ for Slow Fluid Antenna Multiple Access}

\author{Sixu Han, 
        Kai-Kit Wong,~\IEEEmembership{Fellow,~IEEE}, 
        Hanjiang Hong,~\IEEEmembership{Member,~IEEE}, and 
        Hyundong Shin,~\IEEEmembership{Fellow,~IEEE}
\vspace{-8mm}

\thanks{The work of K. K. Wong is supported by the Engineering and Physical Sciences Research Council (EPSRC) under Grant EP/W026813/1. The work of H. Shin is supported by the National Research Foundation of Korea (NRF) grant funded by the Korean government (MSIT) (RS-2025-00556064 and RS-2025-25442355), and by the Ministry of Science and ICT (MSIT), Korea, under the ITRC (Information Technology Research Center) support program (IITP-2025-RS-2021-II212046), supervised by the IITP (Institute for Information \& Communications Technology Planning \& Evaluation).}

\thanks{S. Han, K. K. Wong, and H. Hong are with the Department of Electronic and Electrical Engineering, University College London, London, United Kingdom. K. K. Wong is also affiliated with the Department of Electronic Engineering, Kyung Hee University, Yongin-si, Gyeonggi-do 17104, Korea (e-mail: \{sixu.han.22, kai-kit.wong, hanjiang.hong\}@ucl.ac.uk).}
\thanks{H. Shin is with the Department of Electronics and Information Convergence Engineering, Kyung Hee University, Yongin-si, Gyeonggi-do 17104, Republic of Korea (e-mail: hshin@khu.ac.kr).}

\thanks{Corresponding author: Kai-Kit Wong.}
}

\markboth{Journal of \LaTeX\ Class Files,~Vol.~14, No.~8, August~2021}%
{Shell \MakeLowercase{\textit{et al.}}: A Sample Article Using IEEEtran.cls for IEEE Journals}


\maketitle

\begin{abstract}
Slow fluid antenna multiple access (\emph{s}FAMA), enabled by the fluid antenna system (FAS), has recently emerged as a practical and low-complexity paradigm for supporting massive wireless connectivity. While existing studies have characterized its physical-layer performance under one-shot transmission, its interaction with retransmission protocols and the resulting networking performance remain largely unexplored. In this paper, we study a downlink hybrid automatic repeat request (HARQ)-assisted \emph{s}FAMA framework, termed HARQ-\emph{s}FAMA, in which each user performs distinguished port selection in every HARQ round and combines the received signals across multiple rounds to improve decoding reliability. We develop a comprehensive analytical framework to characterize the outage probability, average packet waiting time, and energy efficiency of the proposed system. The analysis reveals how HARQ exploits the spatial reconfigurability of FAS to simultaneously enhance reliability and improve queueing performance. Numerical results corroborate the theoretical analysis and demonstrate that the HARQ-\emph{s}FAMA system significantly outperforms conventional one-shot \emph{s}FAMA in terms of reliability, delay, and energy efficiency. These findings suggest that the integration of HARQ and \emph{s}FAMA provides a promising pathway toward a practical and standards-compatible massive access solution for future wireless networks.
\end{abstract}

\begin{IEEEkeywords}
Hybrid automatic repeat request (HARQ), fluid antenna system (FAS), fluid antenna multiple access (FAMA), delay, reliability, retransmissions.
\end{IEEEkeywords}

\vspace{-2mm}
\section{Introduction}
\IEEEPARstart{T}{he rapid} growth of Internet-of-Things (IoT) devices, massive machine-type communications (mMTC), and emerging immersive applications is driving future wireless networks toward an era of unprecedented massive connectivity, while simultaneously imposing increasingly stringent requirements on throughput, latency, and reliability \cite{intro1,intro2,intro4}. To address these challenges, substantial research efforts have been devoted to enhancing multiuser spectral efficiency through massive multiple-input multiple-output (MIMO) \cite{massive1,massive2} and advanced non-orthogonal transmission frameworks such as non-orthogonal multiple access (NOMA) \cite{NOMA, NOMA1} and rate-splitting multiple access (RSMA) \cite{RSMA, RSMA1}. Despite their remarkable gains, however, these schemes generally rely on accurate channel state information (CSI) at the transmitter side and sophisticated processing, which may limit their scalability and practical applicability in massive-access scenarios.

This motivates the search for a low-complexity and scalable multiple access architecture that can support massive spectrum sharing with reduced reliance on CSI. A promising direction in this regard is the fluid antenna system (FAS) concept \cite{FAS1,FAS2,FAS3,spatialblockmodel2,new2025flar,wu2024flu,Lu-2025,hong}, which treats the antenna as a reconfigurable physical-layer resource to broaden system design and optimization, and is highly motivated by advances in reconfigurable and flexible antenna technologies, such as movable elements \cite{zhu2024_fas_history}, liquid-based antennas \cite{shen2024_surfacewave_fas,wang2026_em_reconfig_fas}, pixel-based structures \cite{zhang2025_pixel_reconfig,liu2025_wideband_pixel_fas,wong2026_pixel_meet_fas}, and metamaterial antennas \cite{Zhang-jsac2026,liu2025_meta_fluid_optics}. In \cite{Tong-2025}, various implementation technologies and testbeds for FAS have been discussed and compared.

Since the seminal work in \cite{wong-cl2020fas,FAS}, position reconfigurability in FAS has been recognized as a new spatial degree of freedom (DoF) that can considerably enhance wireless system performance. Building on this concept, fluid antenna multiple access (FAMA) is an umbrella paradigm that advocates the use of FAS for scalable massive multiple access on the same channel use \cite{FAMA3}. FAMA enables massive multiple access without relying on transmitter-side CSI or precoding, thereby reducing signalling overhead and processing complexity. However, it remains compatible with conventional base station (BS)-side precoding techniques when desired. In particular, each user can exploit its FAS to scan the fading conditions across multiple candidate ports and activate those at which interference is most strongly attenuated by multipath propagation. 

Depending on how fast the FAS updates its port, FAMA can be roughly classified into fast FAMA (\emph{f}FAMA) \cite{FAMA1,FAMA2} and slow FAMA (\emph{s}FAMA) \cite{sFAMA,sFAMA2,sFAMA3}. Within these two categories, \emph{f}FAMA performs symbol-level port switching to suppress the instantaneous interference-plus-noise signal, with extraordinary multiplexing capability at the expense of high sensing and switching complexity. In contrast, \emph{s}FAMA updates the selected port only when the CSI changes, offering a more practical implementation with weakened multiplexing gain.

Existing studies on \emph{s}FAMA mainly focused on its block-level interference mitigation capability and single-transmission physical-layer performance, particularly in terms of one-shot outage behavior. In practice, hybrid automatic repeat request (HARQ) is a mechanism that combines forward error correction with retransmission-based error recovery. It has become a fundamental reliability-enhancement mechanism in wireless networks and has been extensively studied \cite{HARQ1}. In particular, under HARQ with chase combining (HARQ-CC), repeated transmissions of the same packet can be accumulated and combined at the receiver across multiple rounds \cite{HARQCC}. The HARQ-based reliability enhancement has also been demonstrated in advanced multiple access frameworks such as RSMA \cite{1,2,3} and NOMA \cite{4,5}. Nevertheless, HARQ-assisted \emph{s}FAMA has not been studied before. This naturally motivates a compatible HARQ design to further enhance the reliability of \emph{s}FAMA networks. Moreover, a fundamental performance analysis of HARQ-assisted \emph{s}FAMA is cucial under an information-theoretic decoding framework, particularly with respect to the reliability-latency-throughput trade-off. 

Motivated by this, this paper proposes a downlink HARQ-assisted \emph{s}FAMA framework, namely HARQ-\emph{s}FAMA, in which each user is equipped with a FAS to perform \emph{s}FAMA in each HARQ round. We characterize the reliability, delay, throughput, and energy efficiency performance of the HARQ-\emph{s}FAMA network under an information-theoretic decoding criterion, and validate the analysis through numerical results. 
Specifically, our main contributions are summarized as follows:
\begin{itemize}
\item We propose the HARQ-\emph{s}FAMA framework, wherein each user equipment (UE) performs \emph{s}FAMA in each HARQ round. Additionally, per-UE HARQ-CC is implemented for the active packet, with the signal being re-transmitted over an independent block-fading channel. Each UE is equipped with a FAS to select the port  that maximizes the per-round signal-to-interference ratio (SIR), and it combines the received signals from multiple HARQ rounds using the proposed optimal combining method. 
\item We present a theoretical analysis of the performance of the HARQ-\emph{s}FAMA framework. Specifically, we employ the spatial block-correlation model in \cite{spatialblockmodel,spatialblockmodel1}, and derive analytical and semi-analytical expressions for key performance metrics, including the outage probability, average waiting time, payload throughput, and energy efficiency, thereby enabling a systematic characterization of the trade-off among reliability, delay, throughput, and energy efficiency in the HARQ-\emph{s}FAMA framework.
\item Simulation results are provided to assess HARQ-\emph{s}FAMA and validate our analytical results. The results reveal a reliability-delay-energy-efficiency trade-off: first, increasing the SIR threshold or the HARQ round limit improves reliability but increases the retransmission burden and degrades delay and energy efficiency under heavy traffic. They also show that the proposed analytical framework is accurate when FAS port numbers are not particularly large. However, the outage and system-level gains saturate at high port densities due to spatial correlation over the fixed-size aperture, and FAS consistently outperforms the fixed-position antenna (FPA) baseline.
\end{itemize}


\vspace{-2mm}
\section{\emph{s}FAMA Network Model} \label{NetMod} 
In this paper, we consider a downlink multiuser network where a BS with $N_t$ FPAs serves $U$ UEs over the same channel use. The BS antennas are sufficiently far apart to be spatially independent. Each antenna is responsible for transmitting an information-bearing signal to a designated UE, i.e., $N_t = U$. Each UE is equipped with a one-dimensional FAS (1D-FAS) of physical size $ W\lambda $, where $\lambda$ is the carrier wavelength. The 1D-FAS has $K$ ports uniformly distributed over the space. 

For UE~$u$, the information data is divided into $Q$ packets, with the $q$-th packet denoted by $\mathbf d_u^{(q)} \in \{0,1\}^{M_{\mathrm s}}$, for $q=1,\ldots,Q$, and $M_{\mathrm s}$ being the number of information bits per packet. Each packet $\mathbf d_u^{(q)}$ is mapped into a length-$L$ block of symbols $\mathbf s_u^{(q)} = \bigl[s_{u,1}^{(q)},\,s_{u,2}^{(q)},\,\ldots,\,s_{u,L}^{(q)}\bigr]^{\mathsf T} \in \mathbb C^L$, which is transmitted over one quasi-static fading block. Accordingly, we define the initial spectral efficiency per packet as \(R_0 \triangleq M_{\mathrm s}/L\) bits/channel use. The average symbol energy is defined as $E_s \triangleq \frac{1}{L}\,\mathbb E\!\left[\bigl\|\mathbf s_u^{(q)}\bigr\|^2\right]$. Similarly, $E_{\tilde u} \triangleq \frac{1}{L}\,\mathbb E\!\left[\|\mathbf s_{\tilde u}\|^2\right]$ for the $\tilde{u}$-th UE. Under equal-power transmission, we have $E_{\tilde u}=E_s$. Packet-header overhead is usually very small compared to the packet size and thus neglected throughout the analysis. 

Suppose that the packet $\mathbf s_u^{(q)}$ is transmitted in the $n$-th block. The received signal at the $k$-th port of UE $u$ is modeled as
\begin{equation}\label{eq:ruk}
\mathbf{r}_{u,k}^{(q)}=  g_{(u,u),k}^{(n)} \mathbf s_u^{(q)}+ \sum_{\substack{\tilde u = 1 \\ \tilde u \neq u}}^{U}+g_{(\tilde u,u),k}^{(n)} \mathbf s_{\tilde u}+ \boldsymbol\eta_{u,k}^{(n)}.
\end{equation}
where $g_{(\tilde u,u),k}^{(n)} \in \mathbb C$ denotes the effective block-fading channel coefficient associated with the BS signal intended for UE~$\tilde u$ as observed at the $k$-th port of UE~$u$ in fading block $n$, and $\mathbf g_{(\tilde u,u)}^{(n)}\triangleq[g_{(\tilde u,u),1}^{(n)},\ldots,g_{(\tilde u,u),K}^{(n)}]^{\mathsf T}\in \mathbb C^{K}$ is the channel vector across the $K$ ports. The $q$-th packet intended for UE~$u$ is denoted by $\mathbf s_u^{(q)}$ and corresponds to the $q$-th transmitted symbol sequence among the $Q$ subsequences, whereas $\mathbf s_{\tilde u}$ is the length-$L$ symbol sequence transmitted for user $\tilde u$ in the considered fading block, with its packet index not explicitly tracked. The transmitted symbol blocks are assumed independent across fading blocks and independent of the small-scale fading realizations. Moreover, $\boldsymbol\eta_{u,k}^{(n)}\in\mathbb C^L$ denotes the additive white Gaussian noise (AWGN) vector at the $k$-th port of UE~$u$ in fading block $n$, whose entries are independent and identically distributed  (i.i.d.) with zero mean and variance $\sigma_{\eta}^{2}$.

Regarding $\mathbf g_{(\tilde u,u)}^{(n)}$, the channel coefficients can be spatially correlated, since the $K$ ports may be closely spaced. Consider a rich-scattering environment where $\mathbf g_{(\tilde u,u)}^{(n)}$ follows a jointly Gaussian distribution \cite{spatialblockmodel}, i.e.,
\begin{equation}\label{eq:blk_joint_gauss}
\mathbf g_{(\tilde u,u)}^{(n)} \sim \mathcal{CN}\!\big(\mathbf 0,\ \sigma^2 \boldsymbol{\Sigma}\big),
~ \forall n=1,\ldots,N, \forall \tilde u = 1, \dots, U.
\end{equation}
where $\boldsymbol{\Sigma}$ represents the spatial correlation matrix with each element defined as $[\boldsymbol{\Sigma}]_{k,l} = J_0(2\pi (k-l)W/(K-1))$. Here, $J_0(\cdot)$ is the Bessel function of the first kind.

Assuming interference-limited scenario, the $u$-th UE using \emph{s}FAMA selects the port $k_u^{(n)}$ maximizing the SIR, i.e.
\begin{equation}\label{k_u^{(n)}}
k_u^{(n)} =  \arg\max_{k=1,\ldots,K} \mathrm{SIR}_{u,k}^{(n)}
\end{equation}
where
\begin{equation}\label{eq:SIR_uk}
    \mathrm{SIR}_{u,k}^{(n)} \triangleq \frac{|g_{(u,u),k}^{(n)}|^2 E_s}
         {\sum_{\substack{\tilde u = 1 \\ \tilde u \neq u}}^{U}|g_{(\tilde u,u),k}^{(n)}|^2 E_{\tilde u}} 
         \stackrel{(a)}{=} \frac{|g_{(u,u),k}^{(n)}|^2}
         {\sum_{\substack{\tilde u = 1 \\ \tilde u \neq u}}^{U}|g_{(\tilde u,u),k}^{(n)}|^2}, 
\end{equation} 
where $(a)$ follows the equal-power assumption across users.

Performance analysis under the correlation models in \eqref{eq:blk_joint_gauss} is challenging. To facilitate mathematically tractable analysis, we employ the spatial block-correlation model \cite{spatialblockmodel} to approximate $\boldsymbol{\Sigma}$. In this model, $\boldsymbol{\Sigma}$ can be approximated by a block-diagonal matrix $\widehat{\boldsymbol{\Sigma}}$ defined as
\begin{equation}\label{eq:Sigma_hat_blockdiag}
\widehat{\boldsymbol{\Sigma}} \in \mathbb R^{K\times K}
=
\begin{pmatrix}
\mathbf A_1 & \mathbf 0  & \cdots & \mathbf 0 \\
\mathbf 0   & \mathbf A_2& \cdots & \mathbf 0 \\
\vdots      & \ddots     & \ddots & \vdots \\
\mathbf 0   & \mathbf 0  & \cdots & \mathbf A_B
\end{pmatrix}.
\end{equation}
where each diagonal block $\mathbf A_b$, $b\in\{1,2,\ldots,B\}$, is a constant correlation matrix of size $L_b\times L_b$, satisfying $\sum_{b=1}^{B}L_b=K$. 
Each block, $\mathbf A_b$, admits the following common form:
\begin{equation}\label{eq:Ab_common_form}
\mathbf A_b
=\begin{pmatrix}
1        & \mu_b^{2} & \cdots   & \mu_b^{2} \\
\mu_b^{2}& 1         & \cdots   & \mu_b^{2} \\
\vdots   & \ddots    & \ddots   & \vdots \\
\mu_b^{2}& \cdots    & \mu_b^{2}& 1
\end{pmatrix},
\end{equation}
where $\mu_b^2 \in (0.95,0.99)$ denotes the within-block spatial correlation constant associated with the $b$-th diagonal block $\mathbf A_b$. For analytical tractability, we further assume a common within-block correlation constant across all spatial blocks, i.e., $\mu_b=\mu, \forall b=1,\ldots,B$. In the subsequent analysis, the exact spatial correlation matrix $\boldsymbol{\Sigma}$ is approximated by the block-diagonal matrix $\widehat{\boldsymbol{\Sigma}}$, i.e., we assume $\mathbf{g}^{(n)}_{(\tilde u,u)}\sim \mathcal{CN}(\boldsymbol{0},\sigma^2\widehat{\boldsymbol{\Sigma}})$.

Therefore, $\{g_{(\tilde u,u),k}^{(n)}\}_{\forall k}$ can be represented by $B$ pairs of common real-valued Gaussian variables $\{(\breve{x}_{\tilde u,b}^{(n)},\breve{y}_{\tilde u,b}^{(n)})\}_{b=1}^{B}$ and independent port-specific pairs $\{(x_{\tilde u,k}^{(n)},y_{\tilde u,k}^{(n)})\}_{k=1}^{K}$. Let $b(k)\in\{1,\ldots,B\}$ denote the group index of port $k$, i.e., $b(k)=1$ for $k=1,\ldots,L_1$, $b(k)=2$ for $k=L_1+1,\ldots,L_1+L_2$, and so on. The channel $g_{(\tilde u,u),k}^{(n)}$ is then approximated by
\begin{multline}\label{eq:common_gauss_multiblock}
g_{(\tilde u,u),k}^{(n)} \approx \sigma\Big(\sqrt{1-\mu^{2}}\,x_{\tilde u,k}^{(n)}+\mu\,\breve{x}_{\tilde u,b(k)}^{(n)}\Big)\\
+ \mathrm{j}\,\sigma\Big(\sqrt{1-\mu^{2}}\,y_{\tilde u,k}^{(n)}+\mu\,\breve{y}_{\tilde u,b(k)}^{(n)}\Big). 
\end{multline}
where all real Gaussian random variables are independent with zero mean and variance $1/2$. The SIR in \eqref{eq:SIR_uk} is written as
\begin{equation}\label{eq:SIR_ratio_XY}
\mathrm{SIR}_{u,k}^{(n)} \approx \frac{S_{u,k}^{(n)}}{I_{u,k}^{(n)}}.
\end{equation}
where $S_{u,k}^{(n)}$ and $I_{u,k}^{(n)}$ are defined as
\begin{align}
S_{u,k}^{(n)} &\triangleq \left(x_{u,k}^{(n)}+\frac{\mu}{\sqrt{1-\mu^2}}x_{u,b(k)}^{(n)}\right)^2 \nonumber\\
&\quad\quad\quad\quad + \left(y_{u,k}^{(n)}+\frac{\mu}{\sqrt{1-\mu^2}}y_{u,b(k)}^{(n)}\right)^2, \label{eq:X_ukn_def}\\
I_{u,k}^{(n)} &\triangleq \sum_{\substack{\tilde u=1\\\tilde u\neq u}}^{U} \left(x_{\tilde u,k}^{(n)}+\frac{\mu}{\sqrt{1-\mu^2}}x_{\tilde u,b(k)}^{(n)}\right)^2 \nonumber\\
&\quad\quad\quad\quad+ \left(y_{\tilde u,k}^{(n)}+\frac{\mu}{\sqrt{1-\mu^2}}y_{\tilde u,b(k)}^{(n)}\right)^2.\label{eq:I_u_k_n_sum}
\end{align}

Only the selected port is connected to the single RF chain. The received SIR of UE~$u$ in block~$n$ for the \emph{s}FAMA network is defined as $\mathrm{SIR}_{u}^{(n)}\triangleq \mathrm{SIR}_{u,k_u^{(n)}}^{(n)}$, where $k_u^{(n)}$ is given by \eqref{k_u^{(n)}}. Consequently, we have
\begin{equation}\label{eq:SIR_max_relation}
\mathrm{SIR}_{u}^{(n)}=\max_{k=1,\ldots,K}\mathrm{SIR}_{u,k}^{(n)}=\max_{k=1,\ldots,K}\frac{S_{u,k}^{(n)}}{I_{u,k}^{(n)}}.
\end{equation}

\vspace{-2mm}
\section{Proposed Retransmission Protocol}
\label{Proposed Retransmission Protocol}
In this section, we propose the packet-level HARQ protocol built on the \emph{s}FAMA network, including the round/block indexing, the received-signal model for each HARQ round, and the receiver combining and termination rules.

\vspace{-2mm}
\subsection{Packet-Level HARQ-\emph{s}FAMA}
\label{subsec:packet_level_harq}
In the HARQ-\emph{s}FAMA system, HARQ is performed at the packet level for each UE. Consistent with the transmission model in Section~\ref{NetMod}, each HARQ round occupies one quasi-static fading block and carries one coded-modulated replica of the active packet. We adopt a per-UE stop-and-wait HARQ discipline on the considered time-frequency resource, where each UE maintains a single active head-of-line (HOL) packet and no parallel HARQ processes are allowed for the same UE. Specifically, once packet $\mathbf{s}_u^{(q)}$ becomes active, it remains the unique packet under service for UE~$u$ until it is either successfully decoded or declared in outage after at most $C$ HARQ rounds, while all other packets are buffered in a First-In, First-Out (FIFO) queue. We index HARQ rounds by $i\in\{1,\ldots,C\}$, where $i=1$ denotes the initial transmission, and $C$ represents the maximum number of rounds.

To index HARQ transmissions on the global fading-block timeline, let $n_{u,q,i}\in\mathbb N_{\ge 1}$ denote the fading-block index in which round $i$ of packet $q$ for UE~$u$ is transmitted.  Under the one-transmission-per-block constraint, $n_{u,q,i}$ specifies the absolute location of the $i$-th HARQ round of packet $q$ on the global block axis, whereas $i$ itself records only the within-packet HARQ round index. As the subsequent analysis focuses on a specific tagged packet, the global block indices can be omitted for simplicity. Accordingly, for any block-dependent quantity $x^{(n)}$, we define $x^{(i)} \triangleq x^{(n_{u,q,i})}$, to denote its value in the fading block occupied by the round $i$ of packet $q$ for UE~$u$. Similarly, we can also define
\begin{equation}
\left\{\begin{aligned}
g_{\tilde u,u,k}^{(i)} &\triangleq g_{\tilde u,u,k}^{(n_{u,q,i})},\\
k_u^{(i)} &\triangleq k_u^{(n_{u,q,i})},\\
I_{u,k}^{(i)} &\triangleq I_{u,k}^{(n_{u,q,i})},
\end{aligned}\right.
\text{and }\left\{\begin{aligned}
\boldsymbol\eta_{u,k}^{(i)} &\triangleq \boldsymbol\eta_{u,k}^{(n_{u,q,i})},\\
S_{u,k}^{(i)} &\triangleq S_{u,k}^{(n_{u,q,i})},\\
\mathrm{SIR}_{u,k}^{(i)} &\triangleq {\rm SIR}_{u,k}^{(n_{u,q,i})}.
\end{aligned}\right.
\end{equation}

For the $i$-th HARQ round, the $u$-th UE received signal at the $k_u^{(i)}$-th port according to the \emph{s}FAMA reception in \eqref{k_u^{(n)}}. The selected-port received signal can be written as
\begin{equation}\label{eq:ruqi_rx}
\mathbf r_{u,i}^{(q)}=g_{(u,u),k_u^{(i)}}^{(i)}\,\mathbf s_u^{(q)}+\sum_{\substack{\tilde u=1\\ \tilde u\neq u}}^{U}a_{\tilde u}^{(i)}\, g_{(\tilde u,u),k_u^{(i)}}^{(i)}\,\mathbf s_{\tilde u}^{(i)}+\boldsymbol\eta_{u,k_u^{(i)}}^{(i)},
\end{equation}
in which $\mathbf s_{\tilde u}^{(i)}\in\mathbb C^L$ denotes the symbol sequence transmitted by the BS for UE~$\tilde u$ in the fading block used by round $i$, and the binary indicator $a_{\tilde u}^{(i)}\in\{0,1\}$ specifies whether the stream intended for UE~$\tilde u$ is active on the considered time-frequency resource in round $i$: $a_{\tilde u}^{(i)}=1$ means that UE~$\tilde u$ is active and hence contributes interference to UE~$u$, whereas $a_{\tilde u}^{(i)}=0$ means that no interfering stream for UE~$\tilde u$ is present in that round. To proceed, we define the number of active interferers in the $i$-th HARQ round as
\begin{equation}\label{eq:A_i_def_sys}
A^{(i)} \triangleq \sum_{\substack{\tilde u=1\\ \tilde u\neq u}}^{U} a_{\tilde u}^{(i)}.
\end{equation}

As this work focuses on the interference-limited regime, we condition the per-round SIR analysis on the event $A^{(i)}\ge 1$, where at least one co-scheduled interferer is active in the considered HARQ round. The event $A^{(i)}=0$ corresponds to an interference-free round, for which the FAMA port selection remains well defined but is no longer governed by inter-user interference. This case is therefore outside the interference-limited SIR distribution analyzed below and can be incorporated separately if required. Accordingly, the activity count used in the following analysis is modeled by the zero-truncated distribution conditioned on $A^{(i)}\ge 1$. The probabilistic model for the activity indicators $\{a_{\tilde u}^{(i)}\}$, and hence for $A^{(i)}$, will be specified in the performance analysis in Section \ref{subsec:outage_analysis}. 

The selected port indices $k_u^{(i)}$ are independent across HARQ rounds, resulting in a sequence of independent SIRs $\{\mathrm{SIR}_{u}^{(i)}\}$. At each HARQ round, the port selection decision also depends on the round-specific activity indicators $\{a_{\tilde u}^{(i)}\}$. As a result, the selected port in the $i$-th HARQ round is given by
\begin{equation}\label{eq:ku_i_HARQ_compact}
k_u^{(i)}=\arg\max_{k}\frac{|g_{(u,u),k}^{(i)}|^2}{\sum_{\substack{\tilde u=1\\ \tilde u\neq u}}^{U}a_{\tilde u}^{(i)} |g_{(\tilde u,u),k}^{(i)}|^2 }=\arg\max_{k}\frac{S_{u,k}^{(i)}}{I_{u,k}^{(i)}},
\end{equation}
where $S_{u,k}^{(i)}$ follows from \eqref{eq:X_ukn_def}, and $I_{u,k}^{(i)}$ is now defined as
\begin{equation}\label{eq:I_u_k_i_sum_short}
I_{u,k}^{(i)}
\triangleq
\sum_{\substack{\tilde u=1\\ \tilde u\neq u}}^{U}
a_{\tilde u}^{(i)}\, S_{\tilde u,k}^{(i)}.
\end{equation}

\vspace{-2mm}
\subsection{HARQ Operation and Receiver Combining}
We adopt per-UE HARQ-CC for the active packet. For a given decoding attempt after the $i$-th HARQ round, where $i\in\{1,\ldots,C\}$, we use $j\in\{1,\ldots,i\}$ to index the HARQ rounds whose observations are combined. The receiver then forms the effective combined observation as
\begin{equation}\label{eq:combiner_buffer_def}
\mathbf z_{u,i}^{(q)}
\triangleq
\sum_{j=1}^{i}\big(\alpha_{u,j}^{(q)}(i)\big)^*\,\mathbf r_{u,j}^{(q)},
\end{equation}
where $\alpha_{u,j}^{(q)}(i)\in\mathbb C$ denotes the combining coefficient applied to the observation from HARQ round $j$ when decoding is attempted after the $i$-th round. For convenience, we define the combining vector as $\boldsymbol{\alpha}_{u}^{(q)}(i)\triangleq[\alpha_{u,1}^{(q)}(i),\ldots,\alpha_{u,i}^{(q)}(i)]^{\mathsf T}\in\mathbb C^{i}$, which will be specified in Section~\ref{subsec:sir_opt_combining}. 

UE $u$ then attempts to decode the packet based on $\mathbf z_{u,i}^{(q)}$. If decoding is successful, the UE sends an ACK to the BS, the packet departs service, and the observations associated with that packet are removed from the combining buffer. Otherwise, a NACK is sent back to the BS. In this case, if $i<C$, the packet remains active and all observations accumulated up to round $i$, namely $\{\mathbf r_{u,j}^{(q)}\}_{j=1}^{i}$, are retained for combining with future retransmissions; if $i=C$, the packet is declared in outage and all stored observations are discarded. For analytical tractability, the ACK/NACK feedback is assumed error-free and instantaneous, and its transmission and processing delay is neglected. Consequently, the feedback does not consume an additional data block in the adopted model.

\vspace{-2mm}
\subsection{SIR-Optimal Soft Combining}
\label{subsec:sir_opt_combining}
Regarding the decoding attempt after the $i$-th round, the port selections $\{k_u^{(j)}\}_{j=1}^{i}$ are carried out in accordance with \eqref{eq:ku_i_HARQ_compact}, sequentially and in a causal sequence. Based on the selected ports and the observations accumulated up to round $i$, the UE receiver then employs a linear combining vector $\boldsymbol{\alpha}_{u}^{(q)}(i)$ to maximize the combined SIR. As a result, the combination design described in this section should be regarded as a post-processing problem under causal conditions, rather than a non-causal joint optimization problem across all HARQ rounds.

Substituting \eqref{eq:ruqi_rx} into \eqref{eq:combiner_buffer_def} yields
\begin{align}
\mathbf z_{u,i}^{(q)}&=\left(\sum_{j=1}^{i}\big(\alpha_{u,j}^{(q)}(i)\big)^*\,g_{(u,u),k_u^{(j)}}^{(j)}\right)\mathbf s_u^{(q)}
\nonumber\\
&\quad+
\sum_{j=1}^{i}\big(\alpha_{u,j}^{(q)}(i)\big)^*
\sum_{\tilde u\neq u} a_{\tilde u}^{(j)}\,
g_{(\tilde u,u),k_u^{(j)}}^{(j)}\mathbf s_{\tilde u}^{(j)}
\nonumber\\
&\quad+
\sum_{j=1}^{i}\big(\alpha_{u,j}^{(q)}(i)\big)^*\,
\boldsymbol{\eta}_{u,k_u^{(j)}}^{(j)} .
\label{eq:z_expand_full_main}
\end{align}
Under the HARQ-CC protocol, the desired component combines coherently across rounds. We assume an the interference-limited regime where noise is neglected. To obtain a tractable post-combining interference-power expression, we introduce the following local assumption.

\begin{assumption}[Interference uncorrelatedness across HARQ rounds]\label{as:temp_uncorr_I}
Let
\begin{equation}\label{eq:iota_def_main}
\boldsymbol{\iota}_{u,j}^{(q)}
\triangleq
\sum_{\substack{\tilde u=1\\ \tilde u\neq u}}^{U}
a_{\tilde u}^{(j)}\, g_{(\tilde u,u),k_u^{(j)}}^{(j)}\,\mathbf s_{\tilde u}^{(j)}
\in\mathbb C^L
\end{equation}
denote the aggregate interference at the selected port in the $j$-th HARQ round for the tagged packet $q$ of UE~$u$. We assume
\begin{equation}\label{eq:cross_round_I_uncorr}
\frac{1}{L}\,
\mathbb E\!\left[
\big(\boldsymbol{\iota}_{u,j}^{(q)}\big)^{\mathsf H}
\boldsymbol{\iota}_{u,\ell}^{(q)}
\right]
=0,~\forall j\neq \ell,
\end{equation}
which is consistent with the adopted i.i.d.\ block-fading model and roundwise randomization of the interfering transmissions.
\end{assumption}

For a given decoding attempt after round $i$, and using the selected port $k_u^{(j)}$ in each round $j\in\{1,\ldots,i\}$, the activity-aware interference power associated with the selected port in round $j$ is given by
\begin{equation}
\sum_{\substack{\tilde u=1\\ \tilde u\neq u}}^{U}
a_{\tilde u}^{(j)}\,\big|g_{(\tilde u,u),k_u^{(j)}}^{(j)}\big|^2
=
\sigma^2(1-\mu^2)\,I_{u,k_u^{(j)}}^{(j)},
\end{equation}
where $I_{u,k_u^{(j)}}^{(j)}$ is given by \eqref{eq:I_u_k_i_sum_short} with $k=k_u^{(j)}$. 

Under Assumption~\ref{as:temp_uncorr_I}, the cross terms in the average combined interference power vanish. Thus, if the packet remains active up to round $i$, the combined SIR at the $i$-th decoding attempt can be approximated as
\begin{equation}\label{eq:post_SIR_main}
\mathrm{SIR}_{u}^{(q)}\!\left(i;\boldsymbol{\alpha}_{u}^{(q)}(i)\right)
\approx
\frac{
\Big|\sum_{j=1}^{i}\big(\alpha_{u,j}^{(q)}(i)\big)^*\, g_{(u,u),k_u^{(j)}}^{(j)}\Big|^{2}
}{
\sigma^2(1-\mu^2)\sum_{j=1}^{i}\big|\alpha_{u,j}^{(q)}(i)\big|^{2}\, I_{u,k_u^{(j)}}^{(j)}
}.
\end{equation}
For compactness, we define
\begin{equation}\label{eq:h_D_def_main}
\begin{aligned}
\mathbf g_{u}^{(q)}(i)&\triangleq\left[g_{(u,u),k_u^{(1)}}^{(1)},\ldots,g_{(u,u),k_u^{(i)}}^{(i)}\right]^{\mathsf T},\\
\mathbf D_{u}^{(q)}(i)&\triangleq\operatorname{diag}\!\left(I_{u,k_u^{(1)}}^{(1)},\ldots,I_{u,k_u^{(i)}}^{(i)}\right).
\end{aligned}
\end{equation}
Then \eqref{eq:post_SIR_main} can be rewritten as
\begin{equation}\label{eq:post_SIR_vec_main}
\mathrm{SIR}_{u}^{(q)}\!\left(i;\boldsymbol{\alpha}_{u}^{(q)}(i)\right)\approx\frac{\big|\big(\boldsymbol{\alpha}_{u}^{(q)}(i)\big)^{\mathsf H}\mathbf g_{u}^{(q)}(i)\big|^{2}}{\sigma^2(1-\mu^2)\,\boldsymbol{\alpha}_{u}^{(q)}(i)^{\mathsf H}\mathbf D_{u}^{(q)}(i)\,\boldsymbol{\alpha}_{u}^{(q)}(i)}.
\end{equation}
In the considered HARQ rounds, given the constraint $A^{(j)}\ge 1$, it necessarily follows that $I_{u,k_u^{(j)}}^{(j)}>0, \forall j=1,\ldots,i$. Hence, $\mathbf D_u^{(q)}(i)$ is invertible.

\begin{proposition}[SIR-optimal soft combining]\label{prop:sir_opt_combining}
In interference-limited scenarios, under Assumption~\ref{as:temp_uncorr_I}, an SIR-maximizing combining vector with packet and HARQ round indices omitted is given by
\begin{equation}\label{eq:alpha_opt_vec_main}
\boldsymbol{\alpha} ^\star_u=
c \big(\mathbf D_{u}^{(q)}(i)\big)^{-1}\mathbf g_{u}^{(q)}(i),
\end{equation}
where $c$ is an arbitrary non-zero complex scalar. With the optimal combining vector, the SIR of the HARQ-\emph{s}FAMA network after $i$ HARQ rounds for packet~$q$ can be derived as
\begin{equation}\label{eq:Gamma_def_main}
\begin{aligned}
\Gamma_u^{(q)}(i)
\!\triangleq
\mathrm{SIR}_{u}^{(q)}\!\left(i;\boldsymbol{\alpha} ^\star_u\right) 
= \sum_{j=1}^{i}
\frac{S_{u,k_u^{(j)}}^{(j)}}{I_{u,k_u^{(j)}}^{(j)}}
\triangleq
\sum_{j=1}^{i}\mathrm{SIR}_u^{(j)},
\end{aligned}
\end{equation}
\end{proposition}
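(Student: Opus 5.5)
The plan is to read \eqref{eq:post_SIR_vec_main} as a generalized Rayleigh quotient and maximize it with the Cauchy--Schwarz inequality in the metric induced by $\mathbf D\triangleq\mathbf D_u^{(q)}(i)$. Write $\mathbf g\triangleq\mathbf g_u^{(q)}(i)$ and $\boldsymbol\alpha\triangleq\boldsymbol\alpha_u^{(q)}(i)$. Since every diagonal entry $I_{u,k_u^{(j)}}^{(j)}$ is strictly positive (we are conditioning on $A^{(j)}\ge 1$), $\mathbf D$ is Hermitian positive definite. Its square root $\mathbf D^{1/2}$ is therefore a well-defined, invertible, real diagonal matrix.

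\textbf{Step 1 (whitening and the bound).} Substitute $\boldsymbol\beta\triangleq\mathbf D^{1/2}\boldsymbol\alpha$, so that $\boldsymbol\alpha^{\mathsf H}\mathbf g=\boldsymbol\beta^{\mathsf H}\mathbf D^{-1/2}\mathbf g$ and $\boldsymbol\alpha^{\mathsf H}\mathbf D\boldsymbol\alpha=\|\boldsymbol\beta\|^2$. Cauchy--Schwarz then gives
\begin{equation*}
\frac{|\boldsymbol\alpha^{\mathsf H}\mathbf g|^2}{\boldsymbol\alpha^{\mathsf H}\mathbf D\boldsymbol\alpha}\le \|\mathbf D^{-1/2}\mathbf g\|^2=\mathbf g^{\mathsf H}\mathbf D^{-1}\mathbf g
\end{equation*}
for every nonzero $\boldsymbol\alpha$. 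Equality holds if and only if $\boldsymbol\beta=c\,\mathbf D^{-1/2}\mathbf g$ for some $c\neq0$, which is the same as $\boldsymbol\alpha=c\,\mathbf D^{-1}\mathbf g$. This establishes \eqref{eq:alpha_opt_vec_main}. The scale invariance in $c$ is immediate, because the ratio is homogeneous of degree zero in $\boldsymbol\alpha$.

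\textbf{Step 2 (evaluating the optimum).} Because $\mathbf D$ is diagonal, the maximum value is
\begin{equation*}
\frac{\mathbf g^{\mathsf H}\mathbf D^{-1}\mathbf g}{\sigma^2(1-\mu^2)}=\sum_{j=1}^{i}\frac{|g_{(u,u),k_u^{(j)}}^{(j)}|^2}{\sigma^2(1-\mu^2)\,I_{u,k_u^{(j)}}^{(j)}}.
\end{equation*}
Next, use the block-correlation representation \eqref{eq:common_gauss_multiblock} together with the definition \eqref{eq:X_ukn_def}. These give $|g_{(u,u),k}^{(j)}|^2=\sigma^2(1-\mu^2)S_{u,k}^{(j)}$, which is the same normalization already used for the interference term just before \eqref{eq:post_SIR_main}. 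The factor $\sigma^2(1-\mu^2)$ then cancels in each summand, leaving $\sum_j S_{u,k_u^{(j)}}^{(j)}/I_{u,k_u^{(j)}}^{(j)}$. By \eqref{eq:ku_i_HARQ_compact}, each summand equals the per-round selected-port SIR $\mathrm{SIR}_u^{(j)}$, which gives \eqref{eq:Gamma_def_main}.

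\textbf{Main obstacle.} The optimization itself is routine. The care is needed in the consistency of normalizations: one must check that the signal power written through $S_{u,k}$ carries exactly the same scaling $\sigma^2(1-\mu^2)$ as the interference power written through $I_{u,k}$. One must also check that the activity indicators $a_{\tilde u}^{(j)}$ enter only through $I_{u,k}^{(j)}$ in \eqref{eq:I_u_k_i_sum_short}, so that the per-round terms match the SIR used for port selection. Finally, the whole argument relies on the approximate SIR expression \eqref{eq:post_SIR_main}, which itself rests on Assumption~\ref{as:temp_uncorr_I} and on neglecting noise. For that reason the result should be stated as holding within that approximate model.
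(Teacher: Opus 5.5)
Your proposal is correct and follows essentially the same route as the paper's proof. Both whiten via $\boldsymbol\beta=\mathbf D^{1/2}\boldsymbol\alpha$ and apply Cauchy--Schwarz with equality under collinearity to obtain $\boldsymbol\alpha^\star=c\,\mathbf D^{-1}\mathbf g$, then cancel $\sigma^2(1-\mu^2)$ using $|g_{(u,u),k}^{(j)}|^2=\sigma^2(1-\mu^2)S_{u,k}^{(j)}$; the only cosmetic difference is that the paper first imposes the normalization $\boldsymbol\alpha^{\mathsf H}\mathbf D\boldsymbol\alpha=1$ before whitening, whereas you bound the scale-invariant ratio directly.
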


\begin{proof}
See Appendix~\ref{app:proof_prop_sir_opt}.
\end{proof}

Proposition~\ref{prop:sir_opt_combining} shows that the combined SIR after $i$ HARQ rounds is given by the sum of the round-wise selected-SIR terms in \eqref{eq:Gamma_def_main}. Therefore, conditioning on the realizations of the first $i-1$ HARQ rounds, the combined SIR at the $i$-th decoding attempt depends on the current-round port only through the ratio ${S_{u,k}^{(i)}}/{I_{u,k}^{(i)}}$. Hence, maximizing the combined SIR over the current-round port reduces to the standard per-round \emph{s}FAMA rule. Since the selected-port SIRs are also independent across HARQ rounds, the outage analysis reduces to characterizing the distribution of the per-round selected SIR and then obtaining the distribution of $\Gamma_u^{(q)}(i)$ via repeated convolution. We will then formalize the decoding criterion and derive the resulting outage probability in Section \ref{Analysis of The Proposed Approach}.

\begin{remark}\label{rem:model_scope_main}
The combiner structure in Proposition~\ref{prop:sir_opt_combining} and the above consistency interpretation both rely on the interference-limited approximation and Assumption~\ref{as:temp_uncorr_I}. If noise is non-negligible or the interference across HARQ rounds is temporally correlated, the additive SIR structure in \eqref{eq:Gamma_def_main} may break down, and the stated conclusions need not hold in general.
\end{remark}

\vspace{-2mm}
\section{Performance Analysis}\label{Analysis of The Proposed Approach}
This section develops the analytical framework to evaluate the efficiency of the proposed HARQ-\emph{s}FAMA scheme. 
We first derive the outage probability under the HARQ-CC protocol, and then study the average number of transmission rounds, the average waiting time, the resource-normalized payload throughput, and the corresponding energy efficiency. 

\vspace{-2mm}
\subsection{Outage Probability}\label{subsec:outage_analysis}
Based on the effective combined SIR in \eqref{eq:Gamma_def_main}, the instantaneous achievable rate after $i$ HARQ rounds is obtained by
\begin{equation}
\label{eq:MI_def_main}
I_u^{(q)}(i)
\triangleq
\log_2\!\bigl(1+\Gamma_u^{(q)}(i)\bigr).
\end{equation}
Since HARQ-CC operates by soft combining the received replicas of the same coded packet, $I_u^{(q)}(i)$ characterizes the mutual information supported by the resulting equivalent channel after $i$ transmission rounds. To facilitate our analysis, we adopt the standard threshold decoding abstraction in which decoding after round~$i$ is declared successful whenever
\begin{equation}\label{eq:decode_rule_main}
I_u^{(q)}(i)\ge R_0,
\end{equation}
or equivalently,
\begin{equation}\label{eq:gamma_th_main}
\Gamma_u^{(q)}(i)\ge \gamma_{\mathrm{th}},~\mbox{where }\gamma_{\mathrm{th}}\triangleq 2^{R_0}-1.
\end{equation}
Hence, the outage probability after at most $C$ HARQ rounds can be defined as
\begin{equation}\label{eq:Pout_def_main}
P_{\mathrm{out}}(C)\triangleq\mathbb P\bigl(\Gamma_u^{(q)}(C)<\gamma_{\mathrm{th}}\bigr).
\end{equation}
To simplify the notations, we denote $P_{\mathrm{out}}\equiv P_{\mathrm{out}}(C)$ when the maximum number of HARQ rounds $C$ is fixed.

To evaluate \eqref{eq:Pout_def_main}, we first characterize the distribution of ${\rm SIR}_u^{(i)}$. With the round-wise \emph{s}FAMA port selection method in \eqref{eq:ku_i_HARQ_compact}, the SIR for the $i$-th HARQ round can be obtained as 
\begin{equation}\label{eq:SIR_selected_main}
\mathrm{SIR}_u^{(i)}=\max_{k=1,\ldots,K}\frac{S_{u,k}^{(i)}}{I_{u,k}^{(i)}},
\end{equation}
where $S_{u,k}^{(i)}$ follows \eqref{eq:X_ukn_def} and $I_{u,k}^{(i)}$ is given by \eqref{eq:I_u_k_i_sum_short}.

We first consider the distribution of the number of active interferers observed during the specified HARQ round, $A^{(i)}$, as defined in \eqref{eq:A_i_def_sys}. Under the mean-field Bernoulli activity model, the indicators $\{a_{\tilde u}^{(i)}\}$ are assumed i.i.d.\ with activity probability $p_{\mathrm a}$, independently across $\tilde u$ and $i$. Hence, $A^{(i)}$ is binomially distributed and has the same law for every $i$. Considering an interference-limited environment, our analysis is restricted to the non-trivial scenario where $A^{(i)}\ge 1$, and the corresponding zero-truncated activity count is defined as 
\begin{equation}
\label{eq:Abar_def_main}
\bar A^{(i)}
\triangleq
A^{(i)}\mid\left\{A^{(i)}\ge 1\right\}.
\end{equation}
Accordingly, for \(m=1,\ldots,U-1\), \(\bar A^{(i)}\) follows the zero-truncated binomial distribution, with its probability mass function (pmf) given by
\begin{equation}\label{eq:Abar_pmf_main}
\mathbb P\bigl(\bar A^{(i)}=m\bigr)=\binom{U-1}{m} \times \frac{p_{\mathrm a}^{m}(1-p_{\mathrm a})^{U-1-m}}{1-(1-p_{\mathrm a})^{U-1}}.
\end{equation}

Since the HARQ rounds are statistically identical, the conditional distribution of the selected per-round SIR is the same for every $i$. Omitting the round index, we denote the generic conditional cumulative distribution function (CDF) of the per-round SIR given $\bar A=m$ as $F_{\mathrm{SIR}_u}(\gamma\mid \bar A=m)$.

\begin{proposition}[Conditional CDF of the per-round SIR]\label{prop:cond_cdf_sir_main}
For any $\gamma\ge 0$ and $m\in\{1,\ldots,U-1\}$, the conditional CDF of the per-round \emph{s}FAMA SIR of UE~$u$ is given by
\begin{multline}\label{eq:FSIR_conditional_main}
F_{\mathrm{SIR}_u}\!(\gamma \!\mid\! \bar A\!=\!m)\!=\prod_{b=1}^{B}\int_{0}^{\infty}\!\!\int_{0}^{\infty}
\frac{\tilde r_{u,b}^{\,m-1}e^{-(r_{u,b}+\tilde r_{u,b})}}{\Gamma(m)}\\
\times\!\bigl[\Xi_m\!(\gamma;r_{u,b},\tilde r_{u,b})\bigr]^{L_b}\,dr_{u,b}\,d\tilde r_{u,b},
\end{multline}
where $\Xi_m(\gamma;r_{u,b},\tilde r_{u,b})$ is given in \eqref{eq:Xi_main}, see top of this page.
\end{proposition}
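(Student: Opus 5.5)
Conditioning on the common block variables will be the route. Under the block-correlation approximation \eqref{eq:common_gauss_multiblock}, the $B$ spatial blocks are mutually independent. Within block $b$, all ports share the common Gaussian pairs $(\breve x_{\tilde u,b},\breve y_{\tilde u,b})$ for every user $\tilde u$, and the port-specific pairs are i.i.d. The event $\{\mathrm{SIR}_u<\gamma\}$ is the intersection over all $k$ of $\{S_{u,k}<\gamma I_{u,k}\}$. Conditioning on the common variables of every block therefore factorizes this probability into a product over blocks. Inside each block we get a product of $L_b$ identical per-port conditional probabilities. Taking expectations block by block then gives the outer $\prod_{b=1}^{B}$ and the power $L_b$ in \eqref{eq:FSIR_conditional_main}.

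\textbf{Step 1: reduce the common variables to two scalars.} Given $\bar A=m$, relabel the active interferers $1,\ldots,m$. In block $b$, define $r_{u,b}=\breve x_{u,b}^2+\breve y_{u,b}^2$, which is $\mathrm{Exp}(1)$ since each component has variance $1/2$. Also define $\tilde r_{u,b}=\sum_{\tilde u=1}^{m}(\breve x_{\tilde u,b}^2+\breve y_{\tilde u,b}^2)$, which is $\mathrm{Gamma}(m,1)$ with density $\tilde r^{m-1}e^{-\tilde r}/\Gamma(m)$. Multiplying the two densities gives exactly the kernel in \eqref{eq:FSIR_conditional_main}.

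Conditioned on the common variables, $S_{u,k}$ is $\tfrac12$ times a noncentral $\chi^2_2$, with noncentrality proportional to $\tfrac{\mu^2}{1-\mu^2}r_{u,b}$. Its law depends on the common variables only through $r_{u,b}$, by rotational invariance of the complex Gaussian. For the same reason, $I_{u,k}$ is a sum of $m$ independent such terms. By the additivity of noncentral $\chi^2$ it is $\tfrac12$ times a noncentral $\chi^2_{2m}$ with noncentrality parameter $\propto\tfrac{\mu^2}{1-\mu^2}\tilde r_{u,b}$. This is what justifies collapsing the $2m+2$ common variables to the pair $(r_{u,b},\tilde r_{u,b})$.

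\textbf{Step 2: compute the per-port conditional probability.} We need $\Xi_m(\gamma;r,\tilde r)=\mathbb P(S<\gamma I\mid r,\tilde r)$, with $S$ and $I$ conditionally independent. I would write it as $\int_0^\infty F_S(\gamma y\mid r)\,f_I(y\mid\tilde r)\,dy$. Here $F_S$ is the Marcum-$Q$ form $1-Q_1\big(\sqrt{2\mu^2 r/(1-\mu^2)},\sqrt{2\gamma y}\big)$, and $f_I$ is the noncentral-$\chi^2$ density involving $I_{m-1}(\cdot)$. The result should match the closed or integral form in \eqref{eq:Xi_main}. If that display uses series expansions, I would expand the Bessel and Marcum functions as Poisson mixtures of central Gamma laws. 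Each term of $\mathbb P(\mathrm{Gamma}(1+\ell)<\gamma\,\mathrm{Gamma}(m+n))$ is then a regularized incomplete beta function of $\gamma/(1+\gamma)$.

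\textbf{Step 3: assemble the result.} Raising $\Xi_m$ to the power $L_b$ uses conditional i.i.d.-ness across the ports in a block. Integrating against the kernel from Step 1 and multiplying over the independent blocks gives \eqref{eq:FSIR_conditional_main}.

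The main obstacle is Step 2: the exact form of $\Xi_m$ must agree with \eqref{eq:Xi_main}, including the scaling factor $\mu^2/(1-\mu^2)$ in the noncentrality and the factor conventions from the variance $1/2$. The other steps are structural.
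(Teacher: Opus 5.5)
Your proposal matches the paper's own proof in Appendix~\ref{app:per_round_sir_proof} essentially step for step. You condition on the block-common components and collapse them to $r_{u,b}\sim\mathrm{Exp}(1)$ and $\tilde r_{u,b}\sim\mathrm{Gamma}(m,1)$ via the noncentral-$\chi^2$ laws, and you then use conditional i.i.d.-ness within blocks and independence across blocks to get $\prod_{b}\int\!\!\int(\cdot)\,[\Xi_m]^{L_b}$ with $\Xi_m=\int_0^\infty F_{S_{u,k}\mid r_{u,b}}(\gamma z)\,f_{I_{u,k}\mid\tilde r_{u,b},\bar A=m}(z)\,dz$. Your rotational-invariance remark is, if anything, a cleaner justification than the paper gives for why conditioning on $\tilde r_{u,b}$ alone suffices.

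The paper also just asserts the closed-form evaluation of this integral into \eqref{eq:Xi_main}. Note that your Poisson-mixture expansion would give an incomplete-beta double series, not that finite $Q_m$-plus-Bessel expression. Matching \eqref{eq:Xi_main} would still need a resummation or the standard Marcum-$Q$ integral identity for $\mathbb P(S<\gamma I)$.
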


\begin{proof}
See Appendix~\ref{app:per_round_sir_proof}.
\end{proof}

\begin{figure*}[t]
\begin{equation}
\label{eq:Xi_main}
\begin{aligned}
\Xi_m(\gamma;r_{u,b},\tilde r_{u,b})
&=
Q_m\!\left(
\frac{\mu}{\sqrt{1-\mu^2}}
\sqrt{\frac{2\gamma \tilde r_{u,b}}{\gamma+1}},
\frac{\mu}{\sqrt{1-\mu^2}}
\sqrt{\frac{2 r_{u,b}}{\gamma+1}}
\right)
-
\left(\frac{1}{\gamma+1}\right)^m
\exp\!\left(
-\frac{\mu^2}{1-\mu^2}\,
\frac{\gamma \tilde r_{u,b}+r_{u,b}}{\gamma+1}
\right) \\
&\quad\times
\sum_{k=0}^{m-1}\sum_{j=0}^{m-k-1}
\frac{(m-j-k)_j}{j!}
\left(\frac{r_{u,b}}{\tilde r_{u,b}}\right)^{\frac{j+k}{2}}
(\gamma+1)^k \gamma^{\frac{j-k}{2}}
I_{j+k}\!\left(
\frac{2\mu^2}{1-\mu^2}\,
\frac{\sqrt{\gamma r_{u,b}\tilde r_{u,b}}}{\gamma+1}
\right)
\end{aligned}
\end{equation}
\hrulefill
\end{figure*}

To reduce the computational complexity for \eqref{eq:FSIR_conditional_main}, we apply Gauss-Laguerre quadrature and derive the CDF as
\begin{equation}
\label{eq:FSIR_GLQ_cond_main}
\begin{aligned}
F_{\mathrm{SIR}_u}(\gamma\mid \bar A=m)
\approx
\prod_{b=1}^{B}&
\frac{1}{\Gamma(m)}
\sum_{\ell=1}^{N_r}
\sum_{\tilde\ell=1}^{N_{\tilde r}}
w_{\ell}\,\widetilde w_{\tilde\ell}^{(m)}
\\
&\times
\Bigl[
\Xi_m\bigl(\gamma;\zeta_{\ell},\widetilde\zeta_{\tilde\ell}^{(m)}\bigr)
\Bigr]^{L_b},
\end{aligned}
\end{equation}
where \(\{\zeta_\ell\}_{\ell=1}^{N_r}\) denote the \(N_r\) roots of the Laguerre polynomial \(L_{N_r}(x)\), and \(\{\widetilde\zeta_{\tilde\ell}^{(m)}\}_{\tilde\ell=1}^{N_{\tilde r}}\) denote the \(N_{\tilde r}\) roots of the generalized Laguerre polynomial \(L_{N_{\tilde r}}^{(m-1)}(x)\). For \(\ell=1,\ldots,N_r\), \(\tilde\ell=1,\ldots,N_{\tilde r}\), and \(m\in\{1,\ldots,U-1\}\), we have
\begin{align}
w_\ell
&\triangleq
\frac{\zeta_\ell}{(N_r+1)^2\bigl[L_{N_r+1}(\zeta_\ell)\bigr]^2},
\label{eq:GL_w_r}\\
\widetilde w_{\tilde\ell}^{(m)}
&\triangleq
\frac{\Gamma(N_{\tilde r}+m)}{N_{\tilde r}!\,(N_{\tilde r}+1)^2}\,
\frac{\widetilde\zeta_{\tilde\ell}^{(m)}}{\bigl[L_{N_{\tilde r}+1}^{(m-1)}(\widetilde\zeta_{\tilde\ell}^{(m)})\bigr]^2}.
\label{eq:GL_w_tilder}
\end{align}
When $\mu\to 1$, the block-diagonal approximation is precise, and the double-sum in \eqref{eq:Xi_main} becomes negligible. In this regime, the following approximation can be employed: 
\begin{equation}
\label{eq:Xi_mu_to_1_approx}
\begin{aligned}
\big[&\Xi_m\!\big(\gamma; r_{u,b},\tilde r_{u,b}\big)\big]^{L_b}\\
&\approx\!
\Biggl[
Q_m\!\left(
\frac{\mu}{\sqrt{1-\mu^{2}}}
\sqrt{\frac{2\gamma \tilde r_{u,b}}{\gamma+1}},
\frac{\mu}{\sqrt{1-\mu^{2}}}
\sqrt{\frac{2r_{u,b}}{\gamma+1}}
\right)
\Biggr]^{L_b}\!\!\!.
\end{aligned}
\end{equation}

By averaging the conditional CDF over the distribution of $\bar A$ for $m=1,\dots,U-1$, the CDF of the per-round SIR is obtained as
\begin{equation}\label{eq:FSIR_unconditional_main}
\begin{aligned}
F_{\mathrm{SIR}_u}(\gamma)
&=
\sum_{m=1}^{U-1}
\mathbb P(\bar A=m)\,
F_{\mathrm{SIR}_u}(\gamma\mid \bar A=m)
\\
&=
\sum_{m=1}^{U-1}
\frac{\binom{U-1}{m}p_{\mathrm a}^{m}(1-p_{\mathrm a})^{U-1-m}}
{1-(1-p_{\mathrm a})^{U-1}}
\\
&\quad\times
\prod_{b=1}^{B}
\int_{0}^{\infty}\!\!\int_{0}^{\infty}
\frac{\tilde r_{u,b}^{\,m-1}e^{-(r_{u,b}+\tilde r_{u,b})}}{\Gamma(m)}
\\
&\qquad~~~~\times
\bigl[\Xi_m(\gamma;r_{u,b},\tilde r_{u,b})\bigr]^{L_b}
\,dr_{u,b}\,d\tilde r_{u,b}.
\end{aligned}
\end{equation}
This CDF serves as the basis for the subsequent HARQ-CC outage analysis over multiple transmission rounds. 

\begin{theorem}[HARQ-CC cumulative-SIR characterization]
\label{thm:outage_conv_main}
Under the adopted block-fading model and the independent randomization of the interfering activity across HARQ rounds, for any $j\in\{1,\ldots,C\}$, the CDF of the SIR after $j$ HARQ rounds is given by the $j$-fold Stieltjes convolution
\begin{equation}
\label{eq:F_Gamma_j_main}
F_{\Gamma_u^{(q)}(j)}(\gamma)
=
F_{\mathrm{SIR}_u}^{\circledast j}(\gamma), 
~\gamma\ge 0,
\end{equation}
where $F_{\mathrm{SIR}_u}(\gamma)$ is the CDF of per-round SIR given in \eqref{eq:FSIR_unconditional_main}. Equivalently, $F_{\Gamma_u^{(q)}(j)}(\gamma)$ can be computed recursively by
\begin{subequations}
\label{eq:conv_recursion_main}
\begin{align}
F_{\Gamma_u^{(q)}(1)}(\gamma)&=F_{\mathrm{SIR}_u}(\gamma),\label{eq:conv_recursion_main_init}\\
F_{\Gamma_u^{(q)}(j)}(\gamma)&=\int_{0}^{\gamma}F_{\Gamma_u^{(q)}(j-1)}(\gamma-x)\,dF_{\mathrm{SIR}_u}(x),~2\le j\le C.
\label{eq:conv_recursion_main_step}
\end{align}
\end{subequations}
Consequently, the HARQ-CC outage probability at decoding threshold $\gamma_{\mathrm{th}}=2^{R_0}-1$ is given by
\begin{equation}\label{eq:Pout_conv_main}
P_{\mathrm{out}}
=
F_{\Gamma_u^{(q)}(C)}(\gamma_{\mathrm{th}})
=
F_{\mathrm{SIR}_u}^{\circledast C}(\gamma_{\mathrm{th}}). 
\end{equation}
\end{theorem}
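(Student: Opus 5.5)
The argument rests on Proposition~\ref{prop:sir_opt_combining}, which writes the combined SIR after $j$ rounds as $\Gamma_u^{(q)}(j)=\sum_{\ell=1}^{j}\mathrm{SIR}_u^{(\ell)}$. It then suffices to show that the summands are i.i.d.\ nonnegative random variables with common CDF $F_{\mathrm{SIR}_u}$ from \eqref{eq:FSIR_unconditional_main}. Once that is established, the law of a sum of independent variables is the convolution of their laws, which yields \eqref{eq:F_Gamma_j_main}.

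\textbf{Step 1: independence and identical distribution.} I will show this property for the per-round SIRs. For round $\ell$, $\mathrm{SIR}_u^{(\ell)}=\max_k S_{u,k}^{(\ell)}/I_{u,k}^{(\ell)}$ is a measurable function of three ingredients:
\begin{itemize}
\item[(i)] the Gaussian variables $\{x_{\tilde u,k}^{(\ell)},y_{\tilde u,k}^{(\ell)},\breve x_{\tilde u,b}^{(\ell)},\breve y_{\tilde u,b}^{(\ell)}\}$ of the fading block $n_{u,q,\ell}$;
\item[(ii)] the activity indicators $\{a_{\tilde u}^{(\ell)}\}$;
\item[(iii)] nothing else, since port selection is per round by \eqref{eq:ku_i_HARQ_compact}.
\end{itemize}
Distinct rounds of the tagged packet occupy distinct fading blocks under the one-transmission-per-block constraint. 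By the block-fading model, ingredient (i) is therefore independent across rounds with the same law. By the mean-field Bernoulli model, ingredient (ii) is i.i.d.\ across rounds. Since the rounds are conditioned on $A^{(\ell)}\ge 1$ independently, each round has the zero-truncated law \eqref{eq:Abar_pmf_main}. Hence the $\mathrm{SIR}_u^{(\ell)}$ are independent, being functions of independent collections. They are identically distributed with CDF obtained by averaging Proposition~\ref{prop:cond_cdf_sir_main} over $\bar A$, namely $F_{\mathrm{SIR}_u}$.

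\textbf{Step 2: the recursion.} I will prove \eqref{eq:conv_recursion_main} by induction on $j$. The base case $j=1$ is immediate. For $j\ge2$, write $\Gamma_u^{(q)}(j)=\Gamma_u^{(q)}(j-1)+\mathrm{SIR}_u^{(j)}$, where the two terms are independent by Step 1. Conditioning on $\mathrm{SIR}_u^{(j)}=x$ and applying the law of total probability in Stieltjes form gives
\[
\mathbb P\bigl(\Gamma_u^{(q)}(j)\le\gamma\bigr)=\int_{[0,\infty)}F_{\Gamma_u^{(q)}(j-1)}(\gamma-x)\,dF_{\mathrm{SIR}_u}(x).
\]
The range of integration reduces to $[0,\gamma]$ because $F_{\Gamma_u^{(q)}(j-1)}(\gamma-x)=0$ for $x>\gamma$, as all terms are nonnegative. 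Unrolling the recursion yields the $j$-fold convolution $F_{\mathrm{SIR}_u}^{\circledast j}$.

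\textbf{Step 3: the outage expression.} Taking $j=C$ and evaluating at $\gamma_{\mathrm{th}}$ gives \eqref{eq:Pout_conv_main} via the definition \eqref{eq:Pout_def_main}. One point needs checking here: \eqref{eq:Pout_def_main} uses a strict inequality while the CDF is non-strict. These agree because $F_{\mathrm{SIR}_u}$ is continuous (it is an absolutely continuous ratio of Gaussian-derived quantities), so the convolution is also continuous.

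\textbf{Main obstacle.} The delicate part is the independence claim in Step 1. Specifically, I must verify that the packet-level stop-and-wait dynamics, which determine which blocks $n_{u,q,\ell}$ are used, do not introduce dependence between the block indices and the fading or activity realizations. The block indices are determined by past decoding outcomes, so they are stopping-time–like. I would handle this by noting that $n_{u,q,\ell}$ depends only on information from blocks strictly before $n_{u,q,\ell}$. By the strong-Markov/i.i.d.\ property of the block sequence, the variables of block $n_{u,q,\ell}$ are then independent of that past and have the same law. The joint law of $(\mathrm{SIR}_u^{(1)},\dots,\mathrm{SIR}_u^{(j)})$ is therefore the product law. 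This is the step where the hypothesis of ``independent randomization of the interfering activity across HARQ rounds'' is used.
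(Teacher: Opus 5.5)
Your proposal is correct and follows the same route as the paper. Proposition~\ref{prop:sir_opt_combining} gives the additive structure $\Gamma_u^{(q)}(j)=\sum_{\ell=1}^{j}\mathrm{SIR}_u^{(\ell)}$; combined with i.i.d.\ per-round SIRs, this yields the $j$-fold Stieltjes convolution, which is then evaluated at $\gamma_{\mathrm{th}}$ to get the outage. The paper only asserts the i.i.d.\ property. Your additions are extra rigor within the same argument, not a different one: the induction for the recursion, the continuity remark reconciling the strict inequality in \eqref{eq:Pout_def_main} with the CDF, and the stopping-time justification of independence.
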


\begin{proof}
The round-wise SIRs are i.i.d.\ with CDF in \eqref{eq:FSIR_unconditional_main}. Proposition~\ref{prop:sir_opt_combining} implies that the SIR after $C$ HARQ-CCs rounds is the sum of the $C$ round-wise SIRs. Therefore, the CDF of $\Gamma_u^{(q)}(C)$ is the $C$-fold Stieltjes convolution of per-round CDF, which yields \eqref{eq:F_Gamma_j_main} and \eqref{eq:conv_recursion_main}. Evaluating at $\gamma_{\mathrm{th}}$ gives \eqref{eq:Pout_conv_main}.
\end{proof}


\begin{remark}\label{rem:outage_building_block_main}
For numerical evaluation, the exact per-round CDF $F_{\mathrm{SIR}_u}(\gamma)$ in \eqref{eq:FSIR_unconditional_main} can be approximated using the Gauss-Laguerre formula. The resulting HARQ-CC outage probability can thus be efficiently computed by recursive convolution or, alternatively, via fast Fourier transform (FFT)-based discrete convolution on a suitable SIR grid. Additionally, Theorem~\ref{thm:outage_conv_main} provides the fundamental analytical component for the subsequent analysis of the average number of retransmissions, delay-related metrics, throughput, and energy efficiency.
\end{remark}

\vspace{-2mm}
\subsection{Average Number of HARQ Transmission Rounds}\label{subsec:avg_rounds}
For packet~$q$ of UE~$u$, let $\bar C\in\{1,2,\ldots,C\}$ denote the number of HARQ transmission rounds used by that packet, with $\bar C=C$ if decoding is not successful within the maximum of $C$ rounds. According to Theorem~\ref{thm:outage_conv_main}, the CDF of combined SIR, $F_{\Gamma_u^{(q)}(j)}$, is the $j$-fold Stieltjes convolution of $F_{\mathrm{SIR}_u}$. Also, $\Gamma_u^{(q)}(j)$ is continuous and nondecreasing in $j$ according to Proposition~\ref{prop:sir_opt_combining}. For simplicity, we write
$F_{\Gamma(j)}(\cdot)\equiv F_{\Gamma_u^{(q)}(j)}(\cdot)$. 

\begin{proposition}
\label{prop:avg_rounds_moments}
Under the HARQ stopping rule, the distribution of~$\bar C$ is given by
\begin{equation}
\label{eq:pmf_Cbar_1_new}
\mathbb P(\bar C=1)
=
\mathbb P\!\left(\Gamma_u^{(q)}(1)\ge\gamma_{\mathrm{th}}\right)
=
1-F_{\Gamma(1)}(\gamma_{\mathrm{th}}),
\end{equation}
for $i\in\{2,\ldots,C-1\}$,
\begin{equation}
\label{eq:pmf_Cbar_i_new}
\begin{aligned}
\mathbb P(\bar C=i)
&=
\mathbb P\!\left(
\Gamma_u^{(q)}(i-1)<\gamma_{\mathrm{th}},\,
\Gamma_u^{(q)}(i)\ge\gamma_{\mathrm{th}}
\right) \\
&=
F_{\Gamma(i-1)}(\gamma_{\mathrm{th}})
-
F_{\Gamma(i)}(\gamma_{\mathrm{th}}),
\end{aligned}
\end{equation}
and
\begin{equation}
\label{eq:pmf_Cbar_C_new}
\mathbb P(\bar C=C)
=
\mathbb P\!\left(\Gamma_u^{(q)}(C-1)<\gamma_{\mathrm{th}}\right)
=
F_{\Gamma(C-1)}(\gamma_{\mathrm{th}}).
\end{equation}
Accordingly, the first two moments of~$\bar C$ admit the tail-sum representations
\begin{equation}
\label{eq:ECbar_tail_main_new}
\mathbb E[\bar C]
=
\sum_{i=1}^{C}\mathbb P(\bar C\ge i)
=
1+\sum_{j=1}^{C-1}F_{\Gamma(j)}(\gamma_{\mathrm{th}}),
\end{equation}
and
\begin{equation}
\label{eq:EC2_tail_main_new}
\mathbb E[\bar C^2]
=
\sum_{i=1}^{C}(2i-1)\,\mathbb P(\bar C\ge i)
=
1+\sum_{j=1}^{C-1}(2j+1)\,F_{\Gamma(j)}(\gamma_{\mathrm{th}}).
\end{equation}
\end{proposition}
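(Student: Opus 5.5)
The plan is to work through the event identities generated by the stopping rule, using that $\Gamma_u^{(q)}(j)$ is nondecreasing in $j$. By Proposition~\ref{prop:sir_opt_combining}, $\Gamma_u^{(q)}(j)=\sum_{\ell=1}^{j}\mathrm{SIR}_u^{(\ell)}$ with nonnegative summands. Hence the events $E_j\triangleq\{\Gamma_u^{(q)}(j)<\gamma_{\mathrm{th}}\}$ are nested:
\begin{equation*}
E_1\supseteq E_2\supseteq\cdots\supseteq E_C .
\end{equation*}

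\textbf{Step 1 (identify the tail events).} The packet uses at least $i$ rounds ($i\ge 2$) exactly when decoding failed after each of rounds $1,\ldots,i-1$. Therefore $\{\bar C\ge i\}=\bigcap_{j=1}^{i-1}E_j=E_{i-1}$ by nestedness, and $\{\bar C\ge 1\}$ is the sure event. Theorem~\ref{thm:outage_conv_main} then gives
\begin{equation*}
\mathbb P(\bar C\ge i)=F_{\Gamma(i-1)}(\gamma_{\mathrm{th}}),\qquad 2\le i\le C.
\end{equation*}

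\textbf{Step 2 (pmf).} For $1\le i\le C-1$, write $\mathbb P(\bar C=i)=\mathbb P(\bar C\ge i)-\mathbb P(\bar C\ge i+1)$.
\begin{itemize}
\item For $i=1$ this equals $1-F_{\Gamma(1)}(\gamma_{\mathrm{th}})$, which is \eqref{eq:pmf_Cbar_1_new}.
\item For $2\le i\le C-1$ it equals $F_{\Gamma(i-1)}(\gamma_{\mathrm{th}})-F_{\Gamma(i)}(\gamma_{\mathrm{th}})$. Since $E_i\subseteq E_{i-1}$, this difference is also the probability of $E_{i-1}\cap E_i^{c}$, which is the stated joint event in \eqref{eq:pmf_Cbar_i_new}.
\item For $i=C$, $\bar C=C$ absorbs both success at round $C$ and outage, so $\mathbb P(\bar C=C)=\mathbb P(\bar C\ge C)=F_{\Gamma(C-1)}(\gamma_{\mathrm{th}})$, which is \eqref{eq:pmf_Cbar_C_new}.
\end{itemize}
As a sanity check, the pmf telescopes to $1$.

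\textbf{Step 3 (moments).} For a random variable on $\{1,\ldots,C\}$, use the standard identities
\begin{equation*}
\bar C=\sum_{i=1}^{C}\mathbf 1\{\bar C\ge i\},\qquad
\bar C^2=\sum_{i=1}^{C}(2i-1)\,\mathbf 1\{\bar C\ge i\}.
\end{equation*}
The second follows from $n^2=\sum_{i=1}^{n}(2i-1)$. Taking expectations and inserting Step 1, then reindexing with $j=i-1$, gives $\mathbb E[\bar C]=1+\sum_{j=1}^{C-1}F_{\Gamma(j)}(\gamma_{\mathrm{th}})$ and $\mathbb E[\bar C^2]=1+\sum_{j=1}^{C-1}(2j+1)F_{\Gamma(j)}(\gamma_{\mathrm{th}})$.

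\textbf{Main obstacle.} The only delicate point is Step 1, namely collapsing the intersection of failure events to the single event $E_{i-1}$. This requires monotonicity of $\Gamma_u^{(q)}(j)$ pathwise, not merely in distribution. I would justify it directly from the additive form \eqref{eq:Gamma_def_main} and the nonnegativity of each $\mathrm{SIR}_u^{(j)}$; here $I_{u,k_u^{(j)}}^{(j)}>0$ because of the conditioning $A^{(j)}\ge1$. Boundary conventions, such as ties at $\gamma_{\mathrm{th}}$, are immaterial because $F_{\mathrm{SIR}_u}$ is continuous.
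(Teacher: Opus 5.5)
Your proof is correct and follows essentially the same route as the paper's Appendix~\ref{app:proof_avg_transmissions}. Both identify $\{\bar C\ge i\}$ with $\{\Gamma_u^{(q)}(i-1)<\gamma_{\mathrm{th}}\}$, apply the tail-sum identities for the first two moments, and reindex with $j=i-1$. The differences are minor and mostly in your favour: you get the tail sums from indicator decompositions rather than by swapping a double sum, and you explicitly justify the event identity through pathwise monotonicity of the additive $\Gamma_u^{(q)}(j)$, which the paper only asserts.
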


\begin{proof}
See Appendix~\ref{app:proof_avg_transmissions}.
\end{proof}

In the subsequent analysis, we use the tail-sum forms in \eqref{eq:ECbar_tail_main_new} and \eqref{eq:EC2_tail_main_new}, since these forms are numerically more robust than those obtained by the moments via \eqref{eq:pmf_Cbar_1_new}--\eqref{eq:pmf_Cbar_C_new}, involving subtraction of adjacent CDF values.

\vspace{-2mm}
\subsection{Average Waiting Time}\label{subsec:avg_waiting_time} 
We focus on a tagged UE~$u$ operating a stop-and-wait HARQ process, allowing at most one active HOL packet at any time. Time is slotted into fading blocks, and under the assumption of block-synchronous HARQ, each round precisely occupies one fading block of duration $T_F$ seconds. Consequently, a new packet can start service only at a block boundary. For packet~$q$, let $S_q$ denote its service time in seconds, representing the duration from the moment the packet begins service until it is either successfully decoded or declared in outage. Given that at most $C$ HARQ rounds are permitted and each round occupies one fading block of duration $T_F$, the service time of packet~$q$ can be written as
\begin{equation}
S_q=\bar C\,T_F,
\end{equation}
where $\bar C\in\{1,\ldots,C\}$ denotes the realized number of HARQ rounds. The first two service-time moments are given by
\begin{equation}
\left\{\begin{aligned}
s_1 &\triangleq \mathbb E[S_q]=T_F\,\mathbb E[\bar C],\\
s_2 &\triangleq \mathbb E[S_q^2]=T_F^2\,\mathbb E[\bar C^2],
\end{aligned}\right.
\end{equation}
where the moments $\mathbb E[\bar C]$ and $\mathbb E[\bar C^2]$ are given in \eqref{eq:ECbar_tail_main_new} and \eqref{eq:EC2_tail_main_new}, respectively. These moments depend on the block-level interferer activity probability $p_{\mathrm a}$, considering the SIR distributions $\{F_{\Gamma(j)}\}_{j=1}^{C}$ derived in Section~\ref{subsec:outage_analysis}

In the underlying symmetric interacting system, $p_{\mathrm a}$ is coupled with the steady-state queue busy fraction through the mutual interaction between interference and service dynamics. Under the mean-field approximation, this coupling is characterized by the fixed-point relation $p_{\mathrm a}=\rho$. Consequently, $\mathbb E[\bar C]$ and $\mathbb E[\bar C^2]$, as well as the resulting system-level metrics, are determined through a mean-field fixed-point closure.

Assume that packet arrivals to each UE form a Poisson process with rate $\lambda$ packets/s. For a fixed value of $p_{\mathrm a}$, the tagged queue is approximated under the homogeneity and mean-field decoupling assumptions as an $M/G/1$ queue with i.i.d.\ service times $\{S_q\}$. Its utilization is therefore given by
\begin{equation}\label{eq:rho_def_pa}
\rho(p_{\mathrm a})
\triangleq
\lambda\,\mathbb E[S_q]
=
\lambda s_1(p_{\mathrm a}),
\end{equation}
and its stability requires $\rho(p_{\mathrm a})<1$.


Under the $M/G/1$ approximation, the tagged queue alternates between busy and idle periods. Hence, at steady state, $\rho(p_{\mathrm a})$ represents the long-term busy fraction of the queue \cite{Pollaczek}. In the present grant-free, work-conserving, block-synchronous setting, a non-tagged UE is active in a fading block if and only if its queue is busy serving its HOL packet in that block. Under the mean-field approximation, the typical-interferer activity probability therefore coincides with the common steady-state busy fraction, which yields the closure
\begin{equation}\label{eq:pa_fixed_point_logic}
p_{\mathrm a}
=
\rho(p_{\mathrm a})
=
\lambda T_F\,\mathbb E[\bar C \mid p_{\mathrm a},\gamma_{\mathrm{th}}],
\end{equation}
where the dependence of $\mathbb E[\bar C]$ on $p_{\mathrm a}$ is induced by the interference model in Section~\ref{subsec:outage_analysis}. When adopting the conditioned interferer-count model, the conditioning affects the evaluation of $\mathbb E[\bar C]$ and $\mathbb E[\bar C^2]$ through the per-round SIR law, but it does not alter the queue-level closure in \eqref{eq:pa_fixed_point_logic}.
To formalize this self-consistency relation, we define
\begin{equation}\label{eq:mf_map}
f(p_{\mathrm a};\gamma_{\mathrm{th}})\triangleq\lambda T_F\,\mathbb E[\bar C \mid p_{\mathrm a},\gamma_{\mathrm{th}}],~p_{\mathrm a}\in[0,1].
\end{equation}

A self-consistent steady-state operating point is any fixed point $p_{\mathrm a}^\star\in[0,1]$ satisfying
$
p_{\mathrm a}^\star=f(p_{\mathrm a}^\star;\gamma_{\mathrm{th}}).
$
Queue stability further requires $p_{\mathrm a}^\star<1$, equivalently $\rho(p_{\mathrm a}^\star)<1$. In the numerical results, $p_{\mathrm a}^\star$ is computed for each parameter configuration as a queue-stable solution of the above fixed-point equation, whenever such a solution exists.
The HARQ limit $C$ affects the self-consistent operating point only through its impact on the mean service time $\mathbb E[\bar C]$. Since $\bar C\le C$ almost surely, the mean-field map satisfies
\[
f(p_{\mathrm a};\gamma_{\mathrm{th}})
\le
\lambda T_F C,
~\forall\, p_{\mathrm a}\in[0,1].
\]
Hence, if $\lambda T_F C<1$, then any fixed point of \eqref{eq:pa_fixed_point_logic}, whenever it exists, is automatically queue-stable.

By the Pollaczek-Khinchin formula \cite{Pollaczek}, the mean queueing waiting time is given by
\begin{equation}\label{eq:PK_wait}
\mathbb E[W_q]=\frac{T_F}{2}+\frac{\lambda\,\mathbb E[S_q^2]}{2(1-\rho)}=\frac{\lambda\,T_F^2\,\mathbb E[\bar C^2]}{2(1-\rho)}
+\frac{T_F}{2},
\end{equation}
where $W_q$ denotes the elapsed time from the arrival epoch of packet~$q$ until the block boundary at which it starts service, and the term $T_F/2$ is the mean block-alignment delay.

Thus, the mean sojourn time of packet~$q$ is given by
\begin{equation}\label{eq:sojourn}
\mathbb E[T_q]=\mathbb E[S_q]+\mathbb E[W_q]=T_F\,\mathbb E[\bar C]+\frac{\lambda\,T_F^2\,\mathbb E[\bar C^2]}{2(1-\rho)}+\frac{T_F}{2}.
\end{equation}
Substituting \eqref{eq:ECbar_tail_main_new} and \eqref{eq:EC2_tail_main_new} into \eqref{eq:PK_wait} and \eqref{eq:sojourn} yields the waiting-time and sojourn-time metrics as explicit functions of the HARQ-CC cumulative-SIR distributions $\{F_{\Gamma(j)}\}_{j=1}^{C}$.


\vspace{-2mm}
\subsection{Payload Throughput and Energy Efficiency} 
We use resource-normalized payload metrics, under which performance is evaluated based on the allocated service resources.
The per-UE payload throughput is defined as the number of payload bits successfully transmitted per occupied channel use, i.e.,
\begin{equation}\label{eq:throughput_resource_pay}
\mathcal T_u^{(\mathrm{pay})}\triangleq\frac{M_s(1-P_{\mathrm{out}})}{L\,\mathbb E[\bar C]}=\frac{R_0(1-P_{\mathrm{out}})}{\mathbb E[\bar C]}.
\end{equation}
With homogeneous user setting, the aggregate system payload throughput is obtained as $U\,\mathcal T_u^{(\mathrm{pay})}$.

Furthermore, the payload energy efficiency is defined as the ratio of the expected payload bits successfully transmitted per packet to the average transmit energy consumed in servicing that packet, i.e.,
\begin{equation}\label{eq:ee_pay_resource_def}
\eta_u^{(\mathrm{pay})}\triangleq\frac{M_s(1-P_{\mathrm{out}})}{L E_s\,\mathbb E[\bar C]}~\text{(bits/J)}.
\end{equation}
Equivalently, the relationship between the payload throughput and the energy efficiency can be expressed as
\begin{equation}\label{eq:ee_pay_resource_compact}
\eta_u^{(\mathrm{pay})}=\frac{\mathcal T_u^{(\mathrm{pay})}}{E_s}.
\end{equation}
Both metrics are active-link quantities and therefore depend on $P_{\mathrm{out}}$, $\mathbb E[\bar C]$, and $E_s$. 

\vspace{-2mm}
\section{Simulation Results}\label{Simulation Results}
In this section, numerical and Monte Carlo (MC) results are provided for the considered HARQ-\emph{s}FAMA system under the queue-coupled mean-field model. The system consists of $8$ UEs sharing the same time--frequency resource. To ensure a physically meaningful spatial model, the 1D-FAS aperture at each UE is set to \(15\) cm and the carrier frequency is fixed at \(7\) GHz, which corresponds to a normalized size of \(W=3.50\). The MC simulations are based on the full spatial correlated Jakes model, while the analytical results are approximated by the block-correlation model. The block sizes of the approximated model are computed by \cite[Algorithm~1]{spatialblockmodel} using the eigenvalue threshold \(\rho_{\mathrm{th}}=1\), and the effective parameter \(\mu\) is chosen to match the nearest-neighbor correlation of the full model. 
The analytical results are evaluated using the Gauss-Laguerre quadrature rules in \eqref{eq:GL_w_r} and \eqref{eq:GL_w_tilder}, with quadrature orders set to \((N_r,N_{\tilde r})=(30,30)\). The frame duration is set to \(T_F=1\) ms, and the average symbol energy is normalized to \(E_s=1\). As a benchmark, a conventional FPA receiver is considered. Moreover, the target transmission rate is chosen consistently with the SIR threshold as \(R_0=\log_2(1+\gamma_{\mathrm{th}})\).

\begin{figure}[tbp]
\centering
\includegraphics[width=.95\linewidth]{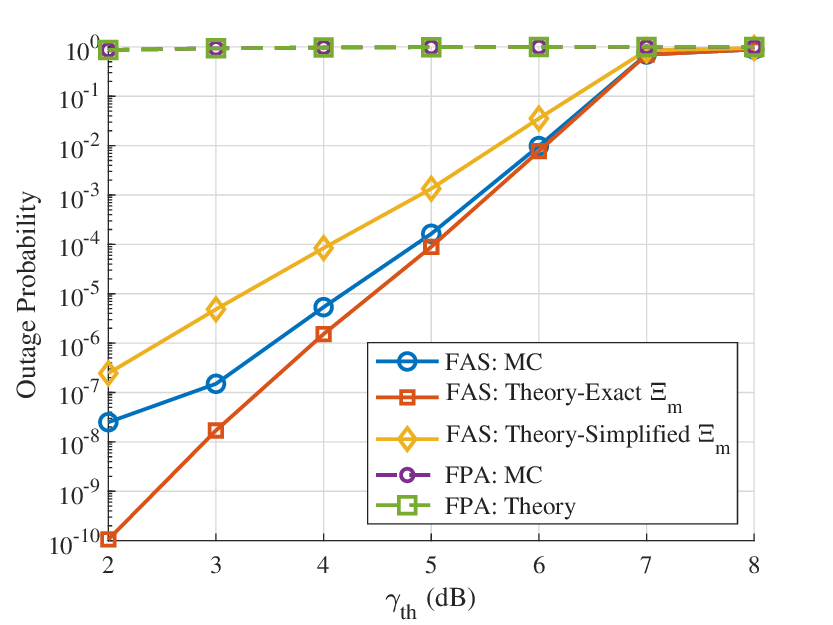}
\caption{Outage probability of HARQ-\emph{s}FAMA versus the SIR threshold $\gamma_{\mathrm{th}}$, comparing the FAS scheme and the FPA baseline.}\label{fig:lambda_outage}
\vspace{-2mm}
\end{figure}

\begin{figure}[tbp]
\begin{center}
\subfigure[Average waiting time]{\includegraphics[width = .95\linewidth]{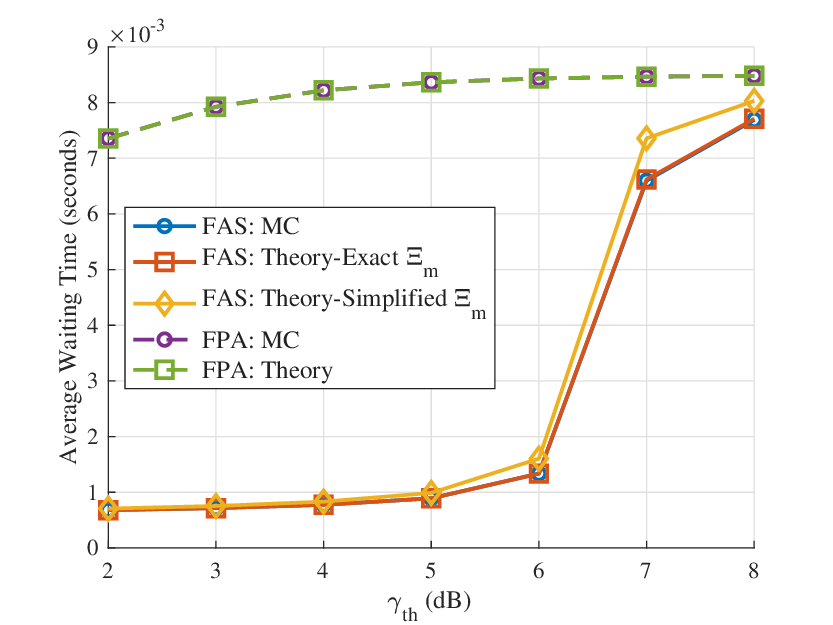}}\\
\vspace{-2mm}
\subfigure[Energy efficiency]{\includegraphics[width = .95\linewidth]{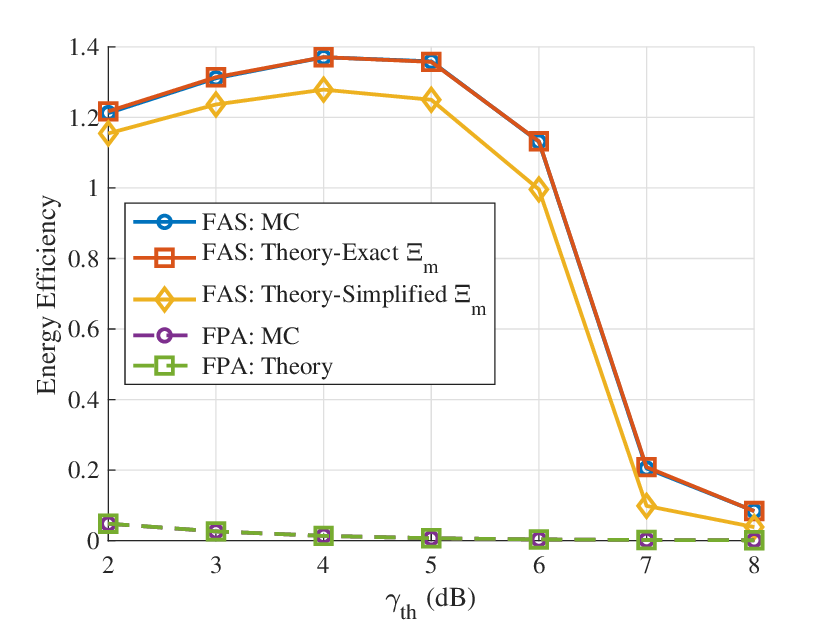}}\\
\vspace{-2mm}
\subfigure[Busy fraction]{\includegraphics[width = .95\linewidth]{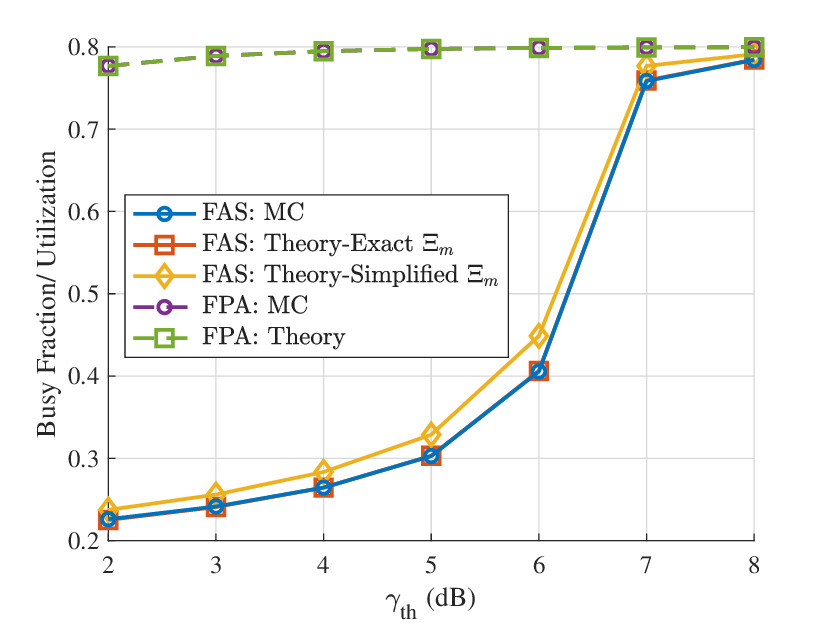}}
\caption{System-level performance of HARQ-\emph{s}FAMA versus the SIR threshold $\gamma_{\mathrm{th}}$: (a) average waiting time, (b) energy efficiency, and (c) busy fraction.}\label{fig:lambda_system}
\vspace{-2mm}
\end{center}
\end{figure}

Fig.~\ref{fig:lambda_outage} shows the outage probability as a function of the SIR threshold \(\gamma_{\mathrm{th}}\). It is seen that the exact analytical results closely match the MC simulation results for both the FAS and FPA schemes, thereby validating the accuracy of the analytical framework. The simplified approximation also captures the overall trend well, while remaining slightly conservative. The accuracy of the approximation improves as \(\mu\) approaches to $1$. Moreover, HARQ-\emph{s}FAMA with FAS receiver consistently achieves a lower outage probability than the FPA baseline.

In Fig.~\ref{fig:lambda_system}, we present the system-level performance, including average waiting time, energy efficiency, and busy fraction. For small values of $\gamma_{\mathrm{th}}$, the decoding requirement is modest, thereby ensuring that the target threshold is surpassed with high probability after only one or a few HARQ rounds. This results in a short average service time, which consequently leads to a small busy fraction and less waiting times, and as a result improves energy efficiency. On the other hand, as $\gamma_{\mathrm{th}}$ increases, the average number of HARQ rounds grows, which increases the service time and causes the utilization factor $\rho$ to approach $1$. As a consequence, the system approaches its stability limit, resulting in a rapid increase in the busy fraction and waiting time, while the energy efficiency experiences a significant decline. Despite this, compared with FPA, the FAS scheme demonstrates much lower outage probability, much lower delay and greater utilization, and much enhanced energy efficiency across the entire range of threshold.

\begin{figure*}[tbp]
\centering
\subfigure[Outage probability]{
\includegraphics[width=.49\textwidth]{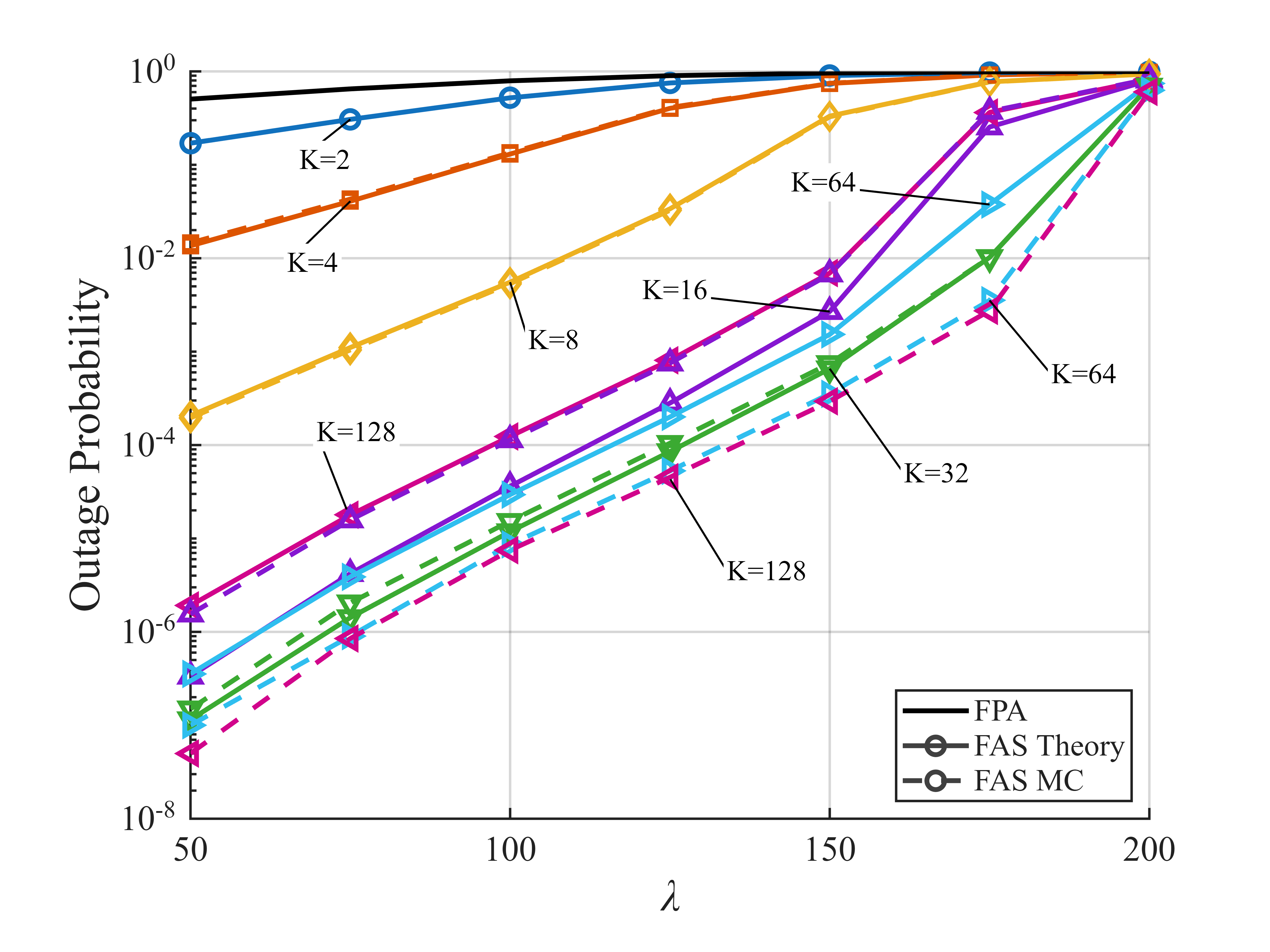}}
\hfill
\subfigure[Average waiting time]{
\includegraphics[width=.49\textwidth]{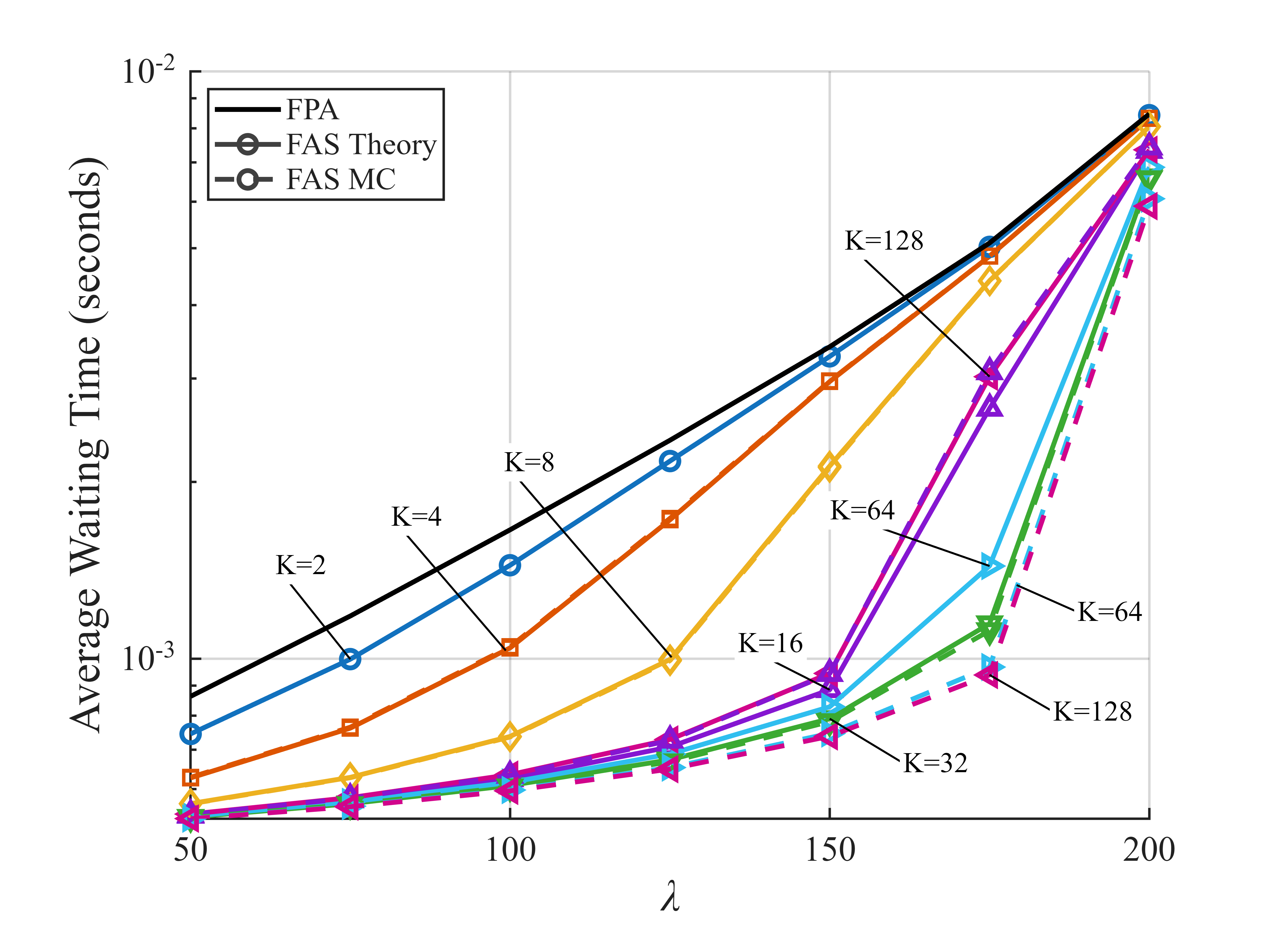}}
\vspace{-2mm}
\subfigure[Energy efficiency]{
\includegraphics[width=.49\textwidth]{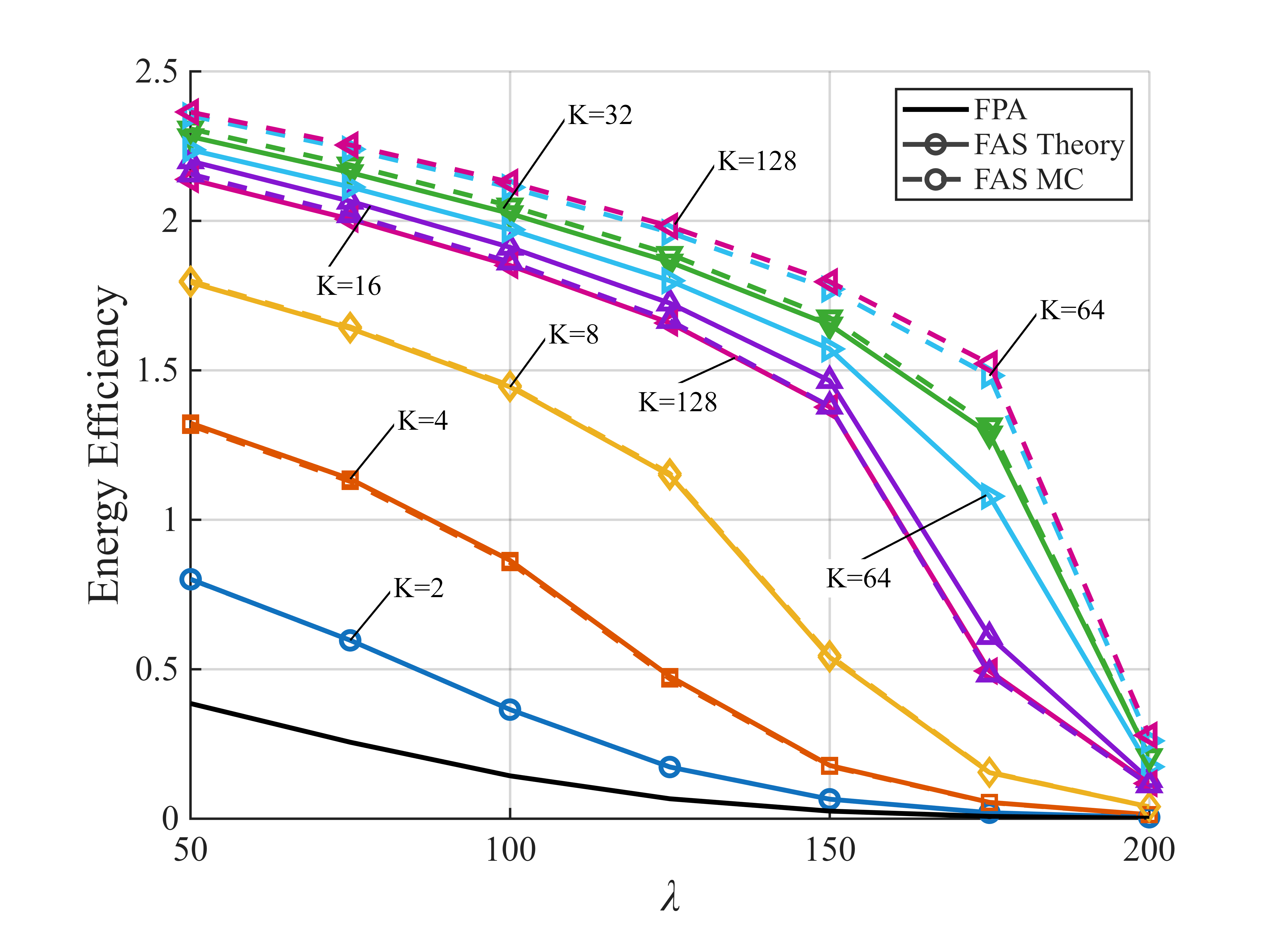}}
\hfill
\subfigure[Busy fraction]{
\includegraphics[width=.49\textwidth]{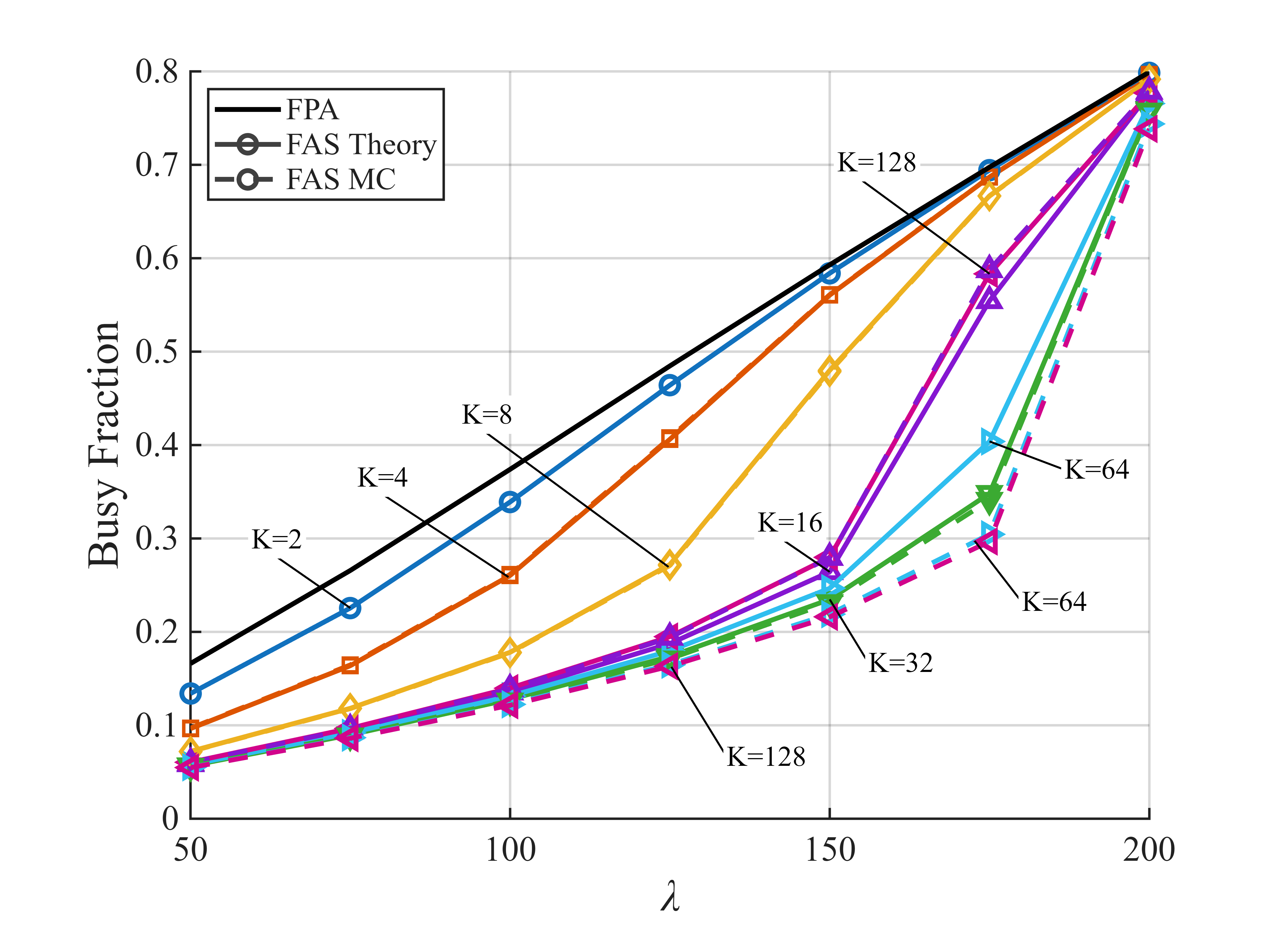}}
\caption{Outage probability and system-level performance of HARQ-\emph{s}FAMA versus the arrival rate $\lambda$ at $\gamma_{\rm th}=7~{\rm dB}$, with different $K$ for $C=4$.}
\label{fig:sys_lambdaK}
\vspace{-2mm}
\end{figure*}

\begin{figure*}[tbp]
\centering
\subfigure[Outage probability]{
\includegraphics[width=.46\textwidth]{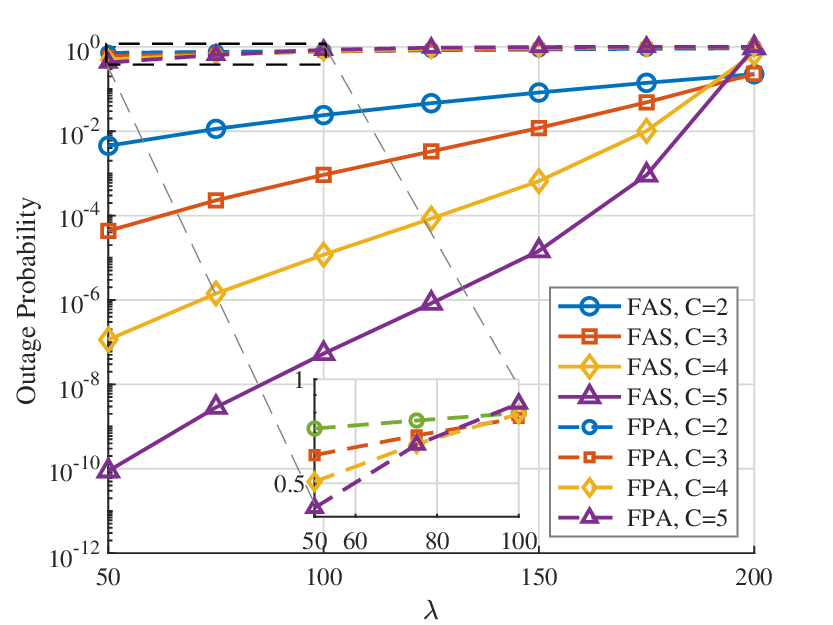}}
\hfill
\subfigure[Average waiting time]{
\includegraphics[width=.46\textwidth]{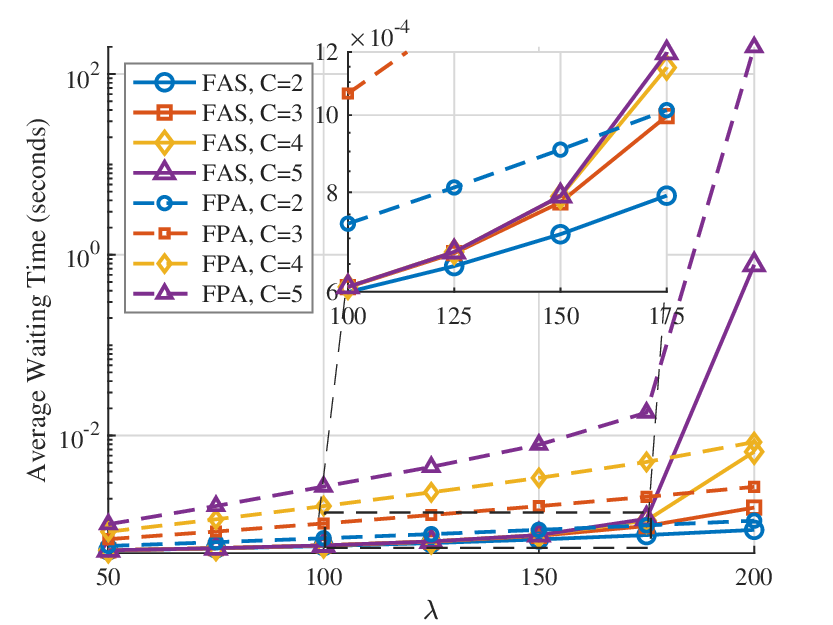}}
\vspace{-2mm}
\subfigure[Energy efficiency]{
\includegraphics[width=.46\textwidth]{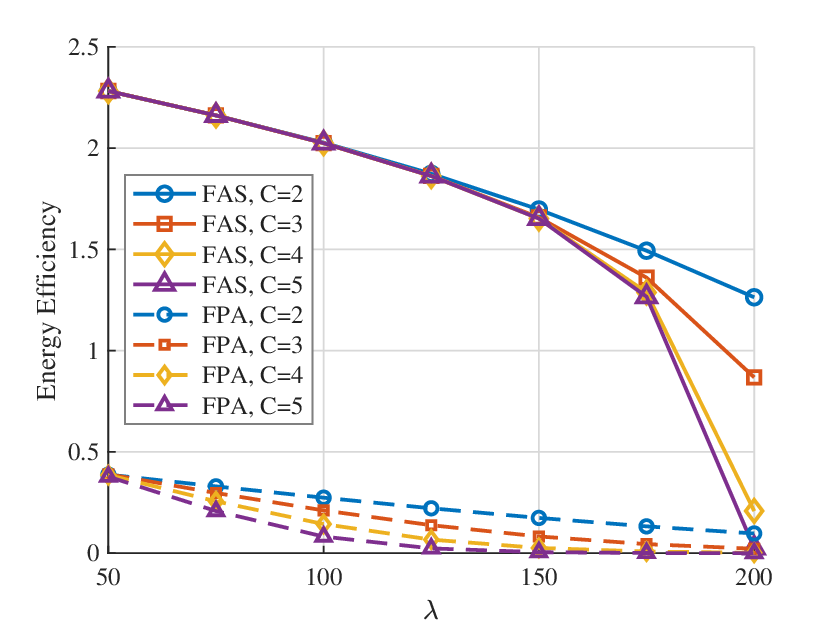}}
\hfill
\subfigure[Busy fraction]{
\includegraphics[width=.46\textwidth]{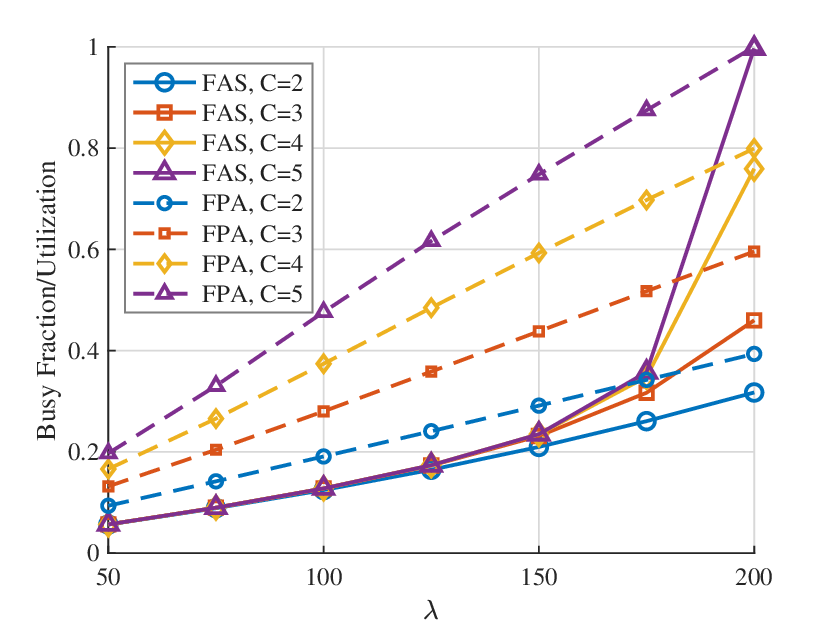}}
\caption{Outage probability and system-level performance of HARQ-\emph{s}FAMA versus the arrival rate $\lambda$ at $\gamma_{\rm th}=7~{\rm dB}$, with different $C$ for $K=32$.}
\label{fig:sys_lambdaC}
\vspace{-2mm}
\end{figure*}

The results in Figs.~\ref{fig:lambda_outage} and \ref{fig:lambda_system} illustrate that HARQ-\emph{s}FAMA experiences a pronounced performance degradation as the SIR threshold approaches \(7\) dB, indicating that \(\gamma_{\mathrm{th}}=7\) dB is a critical operating point near the threshold of queue saturation. We thus fix \(\gamma_{\mathrm{th}}=7\) dB for subsequent analysis and study how the performance at this threshold can be improved through two key design parameters: the maximum HARQ round limit \(C\) and the number of FAS ports \(K\). These parameters improve reliability via distinct mechanisms: increasing \(C\) allows for additional retransmission and combining opportunities, whereas increasing \(K\) enhances spatial selection diversity by offering more candidate receive antenna ports. Unless otherwise specified, the arrival rate \(\lambda\) is varied over a range that preserves queue stability for all considered parameter configurations. Specifically, as per \eqref{eq:rho_def_pa} with \(T_F=1\) ms and \(\mathbb{E}[\bar C]\le C\), operating points approaching \(\rho=1\) are excluded. 

We analyze the impact of \(K\) on both outage and system-level performance in Fig.~\ref{fig:sys_lambdaK}. With \(C=4\), the results show that increasing \(K\) is beneficial when \(K\leq 32\). In this regime, the analytical results closely match the MC results, and a larger candidate-port set provides more effective selection diversity, leading to a lower outage probability, reduced average waiting time and busy fraction, and improved energy efficiency. For small candidate-port sets, e.g., \(K=2,4,8\), the busy fraction and average waiting time increase rapidly as \(\lambda\) grows, while the energy efficiency decreases sharply. This indicates that the system is more sensitive to the traffic load when the available selection diversity is limited. In contrast, for larger candidate-port sets, e.g., \(K=32,64,128\), the busy fraction and average waiting time grow more slowly, and the energy efficiency decreases more gradually, showing improved robustness against the increase in traffic load. Moreover, the FPA baseline exhibits similar qualitative trends, whereas FAS consistently achieves better system-level performance under the same conditions.

When \(K>32\), the additional ports become more correlated under the fixed FAS size and thus provide only limited new effective diversity. Therefore, the MC performance saturates when \(K\) is large. However, we observe a slight performance degradation predicted by the analytical curves for \(K>32\). This degradation is due to the conservatism introduced by the simplified approximation. This conservatism becomes more pronounced in the dense-port regime, where the spatial block-correlation approximation becomes more sensitive to its mismatch with the actual \(J_0(\cdot)\)-based correlation structure. The resulting approximation error is magnified by the \(L_b\)-fold exponentiation within each block and the product operation across blocks, and is further propagated through the queue-level fixed-point closure in \eqref{eq:pa_fixed_point_logic}. Consequently, the analytical model tends to overestimate the average number of transmissions, which yields a larger fixed point for \(p_{\rm a}\) and hence a higher predicted interference level. This leads to conservative outage and system-level performance predictions, as evidenced by the larger analytical busy fraction in Fig.~\ref{fig:sys_lambdaK}(d).

Fig.~\ref{fig:sys_lambdaC} investigates the outage probability and system-level performance versus the arrival rate \(\lambda\) at \(\gamma_{\rm th}=7~{\rm dB}\), with different maximum HARQ round limits \(C\) and fixed \(K=32\). As shown in Fig.~\ref{fig:sys_lambdaC}(a), increasing \(C\) significantly reduces the outage probability for a given \(\lambda\) by providing more HARQ combining opportunities before packet failure declaration. The corresponding system-level results are shown in Figs.~\ref{fig:sys_lambdaC}(b)--(d). The impact of the maximum HARQ round limit \(C\) is limited when $\lambda$ is low, where the busy fraction, average waiting time, and energy efficiency are all close for different \(C\). As \(\lambda\) increases, however, the effect of \(C\) becomes more pronounced. In particular, when \(\lambda\) exceeds approximately \(175\), a larger \(C\) tends to increase the busy fraction and waiting time while reducing the energy efficiency. This is because although a larger \(C\) improves reliability, it also allows more retransmission attempts per packet, thereby increasing the service burden under heavy traffic. Therefore, a larger HARQ round budget does not necessarily improve queueing performance at high offered load. Across all performance metrics, FAS consistently outperforms the FPA baseline over the entire range of \(\lambda\).

\vspace{-2mm}
\section{Conclusion}\label{Conclusion}
This paper considered a downlink HARQ-\emph{s}FAMA system, where each UE with a FAS performs port selection in each HARQ round, utilizing a designed signal combination across multiple rounds. Furthermore, we provided a tractable analytical framework for its performance characterization over block-fading channels. Specifically, the outage probability, average packet waiting time, and energy efficiency were analyzed via an $M/G/1$ approximation with mean-field closure. Numerical results verified the accuracy of the proposed framework for moderate numbers of ports and showed that HARQ-\emph{s}FAMA consistently outperforms its HARQ-assisted FPA benchmark in terms of reliability, delay, and energy efficiency. More importantly, the results demonstrated that performance saturates when the number of ports increases, suggesting that moderate HARQ and port-density settings can more effectively balance reliability, delay, and energy efficiency.

\vspace{-2mm}
\appendices
\section{Proof of Proposition~\ref{prop:sir_opt_combining}}\label{app:proof_prop_sir_opt}
To determine the SIR-optimal combining vector, we maximize the combined SIR in \eqref{eq:post_SIR_vec_main} over all nonzero $\boldsymbol{\alpha}\in\mathbb C^i$, which leads to
\begin{equation}\label{prob:maxSIR_alpha_app}
\max_{\boldsymbol{\alpha}\in\mathbb C^{i}}\frac{\big|\boldsymbol{\alpha}^{\mathsf H}\mathbf g_{u}^{(q)}(i)\big|^{2}}
{\sigma^2(1-\mu^2)\,\boldsymbol{\alpha}^{\mathsf H}\mathbf D_{u}^{(q)}(i)\,\boldsymbol{\alpha}}~~\text{s.t.}~~\boldsymbol{\alpha}\neq \mathbf 0.
\end{equation}
Since the objective is invariant to any nonzero scalar scaling of $\boldsymbol{\alpha}$, we may impose the normalization $\boldsymbol{\alpha}^{\mathsf H}\mathbf D_{u}^{(q)}(i)\,\boldsymbol{\alpha}=1$ without loss of optimality. Hence, \eqref{prob:maxSIR_alpha_app} is equivalent to
\begin{subequations}\label{prob:maxSIR_alpha_norm_app}
\begin{align}
\max_{\boldsymbol{\alpha}\in\mathbb C^{i}}~~&
\big|\boldsymbol{\alpha}^{\mathsf H}\mathbf g_{u}^{(q)}(i)\big|^{2}
\label{prob:maxSIR_alpha_norm_app_obj}\\
\text{s.t.}~~&
\boldsymbol{\alpha}^{\mathsf H}\mathbf D_{u}^{(q)}(i)\,\boldsymbol{\alpha}=1 .
\label{prob:maxSIR_alpha_norm_app_con}
\end{align}
\end{subequations}
Let $\big(\mathbf D_{u}^{(q)}(i)\big)^{1/2}$ denote the principal square root of $\mathbf D_{u}^{(q)}(i)$, and define
\begin{equation}\label{eq:whitening_app}
\left\{\begin{aligned}
\boldsymbol{\beta}&\triangleq \big(\mathbf D_{u}^{(q)}(i)\big)^{1/2}\boldsymbol{\alpha},\\
\boldsymbol{\xi}&\triangleq \big(\mathbf D_{u}^{(q)}(i)\big)^{-1/2}\mathbf g_{u}^{(q)}(i).
\end{aligned}\right.
\end{equation}
Then \eqref{prob:maxSIR_alpha_norm_app_con} is rewritten as
\begin{equation}\label{eq:beta_constraint_app}
\|\boldsymbol{\beta}\|^2=1,
\end{equation}
and the objective can be derived as
\begin{equation}\label{eq:beta_objective_app}
\big|\boldsymbol{\alpha}^{\mathsf H}\mathbf g_{u}^{(q)}(i)\big|^2=\big|\big(\big(\mathbf D_{u}^{(q)}(i)\big)^{-1/2}\boldsymbol{\beta}\big)^{\mathsf H}\mathbf g_{u}^{(q)}(i)\big|^2=\big|\boldsymbol{\beta}^{\mathsf H}\boldsymbol{\xi}\big|^2.
\end{equation}
Thus, the optimization problem can be represented as
\begin{equation}\label{prob:maxSIR_beta_app}
\max_{\boldsymbol{\beta}\in\mathbb C^{i}}\big|\boldsymbol{\beta}^{\mathsf H}\boldsymbol{\xi}\big|^2~~\text{s.t.}~~\|\boldsymbol{\beta}\|^2=1.
\end{equation}
By the Cauchy-Schwarz inequality, we have
\begin{equation}\label{eq:CS_app}
\big|\boldsymbol{\beta}^{\mathsf H}\boldsymbol{\xi}\big|\le\|\boldsymbol{\beta}\|\,\|\boldsymbol{\xi}\|=\|\boldsymbol{\xi}\|,
\end{equation}
with equality if and only if $\boldsymbol{\beta}$ is colinear with $\boldsymbol{\xi}$. Hence,
\begin{equation}\label{eq:beta_opt_app}
\boldsymbol{\beta}^{\star}=e^{\mathrm j\phi}\,\frac{\boldsymbol{\xi}}{\|\boldsymbol{\xi}\|},~\phi\in[0,2\pi),
\end{equation}
and therefore
\begin{equation}\label{eq:alpha_opt_app}
\boldsymbol{\alpha}^{\star}=\big(\mathbf D_{u}^{(q)}(i)\big)^{-1/2}\boldsymbol{\beta}^{\star}=e^{\mathrm j\phi}\,
\frac{\big(\mathbf D_{u}^{(q)}(i)\big)^{-1}\mathbf g_{u}^{(q)}(i)}{\sqrt{\mathbf g_{u}^{(q)}(i)^{\mathsf H}\big(\mathbf D_{u}^{(q)}(i)\big)^{-1}\mathbf g_{u}^{(q)}(i)}}.
\end{equation}
This proves \eqref{eq:alpha_opt_vec_main}, with the arbitrary global factor absorbed into $c_i\in\mathbb C\setminus\{0\}$. Substituting \eqref{eq:alpha_opt_app} into \eqref{eq:post_SIR_vec_main} yields
\begin{equation}\label{eq:SIR_opt_app}
\mathrm{SIR}_{u}^{(q)}\!\left(i;\boldsymbol{\alpha}_{u}^{\star}\right) 
=\frac{\big(\mathbf g_{u}^{(q)}(i)\big)^{\mathsf H}\big(\mathbf D_{u}^{(q)}(i)\big)^{-1}\mathbf g_{u}^{(q)}(i)}{\sigma^2(1-\mu^2)}.
\end{equation}
Finally, using \eqref{eq:h_D_def_main} and \(|g_{u,u,k}^{(j)}|^2=\sigma^2(1-\mu^2)S_{u,k}^{(j)}\), we get
\begin{align}
\big(\mathbf g&_{u}^{(q)}(i)\big)^{\mathsf H}\big(\mathbf D_{u}^{(q)}(i)\big)^{-1}\mathbf g_{u}^{(q)}(i) \nonumber\\
&=\sum_{j=1}^{i}\frac{\big|g_{(u,u),k_u^{(j)}}^{(j)}\big|^2}{I_{u,k_u^{(j)}}^{(j)}}
=\sigma^2(1-\mu^2)\sum_{j=1}^{i}\frac{S_{u,k_u^{(j)}}^{(j)}}{I_{u,k_u^{(j)}}^{(j)}},
\end{align}
which proves \eqref{eq:Gamma_def_main}.

\vspace{-2mm}
\section{Proof of Proposition~\ref{prop:cond_cdf_sir_main}}\label{app:per_round_sir_proof}
Since all HARQ rounds are statistically identical, we omit the round index and consider any given round. 
We use the block partition \(\{\mathcal K_b\}_{b=1}^{B}\) of the \(K\) FAS ports defined in the spatial block-correlation model in Section~\ref{NetMod}, where \(\mathcal K_b\) denotes the set of port indices belonging to block \(b\), with \(|\mathcal K_b|=L_b\).

For simplicity, define
\begin{equation}\label{eq:kappa_def_app}
\kappa \triangleq \frac{\mu}{\sqrt{1-\mu^2}},
\end{equation}
so that $\mu^2/(1-\mu^2)=\kappa^2$.

{\em Desired-signal term}---For any port $k\in\mathcal K_b$, the normalized desired-signal power in \eqref{eq:X_ukn_def} can be written as
\begin{equation}\label{eq:S_def_app}
S_{u,k}=\bigl(x_{u,k}+\kappa x_{u,b}\bigr)^2+\bigl(y_{u,k}+\kappa y_{u,b}\bigr)^2.
\end{equation}

We define the common-component energy of group~$b$ as $r_{u,b}\triangleq x_{u,b}^2+y_{u,b}^2$, for $b=1,\ldots,B$. Since $x_{u,b},y_{u,b}\sim\mathcal N(0,\tfrac12)$ are independent, $r_{u,b}\sim \mathrm{Exp}(1)$, with its probability density function (PDF) given by
\begin{equation}\label{eq:rb_pdf_app}
f_{r_{u,b}}(r)=e^{-r},~r\ge 0.
\end{equation}
Note that $\{r_{u,b}\}_{b=1}^{B}$ are mutually independent. Conditioned on $r_{u,b}$, $2S_{u,k}$ follows non-central chi-square distribution, i.e., $\big(2S_{u,k}\,\big|\,r_{u,b}\big)\sim\chi'^2_2\!\bigl(2\kappa^2 r_{u,b}\bigr)$. Hence, the conditional PDF of $S_{u,k}$ is given by
\begin{equation}\label{eq:fS_cond_app}
f_{S_{u,k}\mid r_{u,b}}(s)=e^{-(s+\kappa^2 r_{u,b})}I_0\bigl(2\kappa\sqrt{r_{u,b}s}\bigr),~s\ge 0,
\end{equation}
and its conditional CDF is given by
\begin{equation}\label{eq:FS_cond_app}
F_{S_{u,k}\mid r_{u,b}}(t)=1-Q_1\!\left(\sqrt{2\kappa^2 r_{u,b}},\sqrt{2t}\right),~t\ge 0,
\end{equation}
where $I_0(\cdot)$ denotes the zero-order modified Bessel function of the first kind and $Q_1(\cdot,\cdot)$ is the first-order Marcum-$Q$ function.

Under the adopted latent-variable representation, the statistical dependence among $\{S_{u,k}\}_{k\in\mathcal K_b}$ is induced solely by the common variable $r_{u,b}$. Therefore, conditioned on $r_{u,b}$, these variables are independent within group~$b$. Averaging the resulting conditional product form over $\{r_{u,b}\}_{b=1}^{B}$ yields
\begin{equation}
\label{eq:jointS_cdf_app}
\begin{aligned}
F_{\{S_{u,k}\}_{k=1}^{K}}(t_1,\ldots,t_K)
&=
\prod_{b=1}^{B}
\int_{0}^{\infty}
e^{-r_{u,b}}
\\
&\quad\times
\prod_{k\in\mathcal K_b}
F_{S_{u,k}\mid r_{u,b}}(t_k)
\,dr_{u,b}.
\end{aligned}
\end{equation}

{\em Interference Term}---Considering \eqref{eq:I_u_k_i_sum_short} and conditioning on $\bar A=m$, let $\mathcal A\triangleq\{\tilde u\neq u:\, a_{\tilde u}=1\}$ denote the set of active interferers in the considered round, so that $|\mathcal A|=m$. Then, for any port $k\in\mathcal K_b$, the aggregate interference power can be written as $I_{u,k}=\sum_{\tilde u\in\mathcal A} S_{\tilde u,k}$. For each group $b$, define
$\tilde r_{u,b}\triangleq\sum_{\tilde u\in\mathcal A}\bigl(x_{\tilde u,b}^2+y_{\tilde u,b}^2\bigr)$. Conditioned on $\bar A=m$, $\big(\tilde r_{u,b}\mid \bar A=m \big)\sim \mathrm{Gamma}(m,1)$, with PDF given by
\begin{equation}\label{eq:tilder_pdf_app}
f_{\tilde r_{u,b}\mid \bar A=m}(r)=\frac{r^{m-1}e^{-r}}{\Gamma(m)},~r\ge 0.
\end{equation}
Moreover, $\big(2I_{u,k}\,\big|\,\tilde r_{u,b},\bar A=m \big)\sim\chi'^2_{2m}\!\bigl(2\kappa^2\tilde r_{u,b}\bigr)$, for $k\in\mathcal K_b$. Thus, the conditional PDF of $I_{u,k}$ is given by
\begin{equation}\label{eq:fI_cond_app}
\begin{aligned}
f_{I_{u,k}\mid \tilde r_{u,b},\bar A=m}(z)
&=
e^{-(z+\kappa^2\tilde r_{u,b})}
\left(
\frac{z}{\kappa^2\tilde r_{u,b}}
\right)^{\frac{m-1}{2}}
\\
&\quad\times
I_{m-1}\!\bigl(2\kappa\sqrt{\tilde r_{u,b}z}\bigr).
\end{aligned}
\end{equation}
Under the same latent-variable representation, conditioning on $(\tilde r_{u,b},\bar A=m)$ renders the variables $\{I_{u,k}\}_{k\in\mathcal K_b}$ independent within group~$b$. Hence,
\begin{equation}
\label{eq:jointI_pdf_app}
\begin{aligned}
&f_{\{I_{u,k}\}_{k=1}^{K}\mid \bar A=m}(z_1,\ldots,z_K)
\\
&\quad=
\prod_{b=1}^{B}
\Biggl[
\int_{0}^{\infty}
\frac{\tilde r_{u,b}^{m-1}e^{-\tilde r_{u,b}}}{\Gamma(m)}
\prod_{k\in\mathcal K_b}
f_{I_{u,k}\mid \tilde r_{u,b},\bar A=m}(z_k)\,
d\tilde r_{u,b}
\Biggr].
\end{aligned}
\end{equation}

{\em Per-Round SIR}---The selected SIR in the considered round is $\mathrm{SIR}_u=\max_{k\in\{1,\ldots,K\}} S_{u,k}/I_{u,k}$, with HARQ round index omitted. Moreover, the desired-signal vector $\{S_{u,k}\}_{k=1}^{K}$ is independent of the interference vector $\{I_{u,k}\}_{k=1}^{K}$. Hence, conditioned on $\bar A=m$, the CDF of ${\rm SIR}_u$ can be derived as 
\begin{equation}\label{eq:FSIR_step2_app}
\begin{aligned}
F_{\mathrm{SIR}_u}&(\gamma\mid \bar A=m) \\
= &
\mathbb P\!\left(
\frac{S_{u,1}}{I_{u,1}}<\gamma,\ldots,\frac{S_{u,K}}{I_{u,K}}<\gamma
\,\middle|\, \bar A=m
\right)
\\
= &
\int_{\mathbb R_+^K}
F_{\{S_{u,k}\}_{k=1}^{K}}(\gamma z_1,\ldots,\gamma z_K)
\\
&\times
f_{\{I_{u,k}\}_{k=1}^{K}\mid \bar A=m}(z_1,\ldots,z_K)\,
dz_1\cdots dz_K.
\end{aligned}
\end{equation}
Substituting \eqref{eq:jointS_cdf_app} and \eqref{eq:jointI_pdf_app} into \eqref{eq:FSIR_step2_app}, we obtain 
\begin{equation}\label{eq:FSIR_step3_app}
\begin{aligned}
F_{\mathrm{SIR}_u}\!(\gamma\!\mid\! \bar A\!=\!m)\!
=&
\prod_{b=1}^{B}
\int_{0}^{\infty}\!\!\int_{0}^{\infty}
e^{-r_{u,b}}
\frac{\tilde r_{u,b}^{m-1}e^{-\tilde r_{u,b}}}{\Gamma(m)}
\\
&\times
\bigl[\Xi_m(\gamma;r_{u,b},\tilde r_{u,b})\bigr]^{L_b}
\,dr_{u,b}\,d\tilde r_{u,b},
\end{aligned}
\end{equation}
where
\begin{equation}
\label{eq:Xi_integral_form_app}
\begin{aligned}
\Xi&_m(\gamma;r_{u,b},\tilde r_{u,b})\\
&\!\triangleq\!
\int_{0}^{\infty}
F_{S_{u,k}\mid r_{u,b}}(\gamma z)\,
f_{I_{u,k}\mid \tilde r_{u,b},\bar A=m}(z)\,dz
\\
&\!=\!
1\!-\!\!\!
\int_{0}^{\infty}\!\!\!\!
Q_1\left(\!\sqrt{2\kappa^2 r_{u,b}},\sqrt{2\gamma z}\right)
\!\times\!
f_{I_{u,k}\mid \tilde r_{u,b},\bar A=m}(z)\,dz.
\end{aligned}
\end{equation}
For fixed $(r_{u,b},\tilde r_{u,b},m)$, the quantity $\Xi_m(\gamma;r_{u,b},\tilde r_{u,b})$ is identical for all $k\in\mathcal K_b$. Thus, the product over the $L_b$ ports in group~$b$ reduces to $\bigl[\Xi_m(\gamma;r_{u,b},\tilde r_{u,b})\bigr]^{L_b}$. Finally, evaluating the integral in \eqref{eq:Xi_integral_form_app} in closed form yields \eqref{eq:Xi_main}. Substituting \eqref{eq:Xi_main} into \eqref{eq:FSIR_step3_app} gives \eqref{eq:FSIR_conditional_main}, which completes the proof.

\vspace{-2mm}
\section{Proof of Proposition~\ref{prop:avg_rounds_moments}}\label{app:proof_avg_transmissions}
The variable $\bar C$ is an integer within $\{1,\ldots,C\}$. Its expectation can be calculated as
\begin{equation}\label{eq:ECbar_app_start}
\mathbb E[\bar C]=\sum_{k=1}^{C} k\,\mathbb P(\bar C=k)=\sum_{k=1}^{C}\sum_{i=1}^{k}\mathbb P(\bar C=k).
\end{equation}
Reordering the summation, it can be rewritten as
\begin{equation}
\label{eq:ECbar_app_swap}
\mathbb E[\bar C]=\sum_{i=1}^{C}\sum_{k=i}^{C}\mathbb P(\bar C=k)=\sum_{i=1}^{C}\mathbb P(\bar C\ge i),
\end{equation}
which is the tail-sum identity. For $i=2,\ldots,C$, the event $\{\bar C\ge i\}$ is equivalent to $\{\Gamma_u^{(q)}(i-1)<\gamma_{\mathrm{th}}\}$. Hence,
\begin{equation}\label{eq:ECbar_app_tail}
\mathbb E[\bar C]=1 + \sum_{i=2}^{C}\mathbb P\!\left(\Gamma_u^{(q)}(i\!-\!1)\!<\!\gamma_{\mathrm{th}}\right)=1 + \sum_{j=1}^{C-1}F_{\Gamma(j)}(\gamma_{\mathrm{th}}),
\end{equation}
which proves \eqref{eq:ECbar_tail_main_new}. Similarly, the tail-sum identity for the second moment of a positive integer-valued random variable gives
\begin{equation}\label{eq:EC2_app_start}
\mathbb E[\bar C^2]=\sum_{i=1}^{C}(2i-1)\,\mathbb P(\bar C\ge i).
\end{equation}
Re-indexing via $j=i-1$, we obtain
\begin{equation}\label{eq:EC2_app_final}
\mathbb E[\bar C^2]=1+\sum_{j=1}^{C-1}(2j+1)\,F_{\Gamma(j)}(\gamma_{\mathrm{th}}),
\end{equation}
which proves \eqref{eq:EC2_tail_main_new}.

\bibliographystyle{IEEEtran}

\end{document}